\newtheorem{theorem}{Theorem}
\newtheorem{lemma}{Lemma}
\newtheorem{corollary}{Corollary}
\theoremstyle{definition}
\newtheorem{definition}{Definition}
\theoremstyle{remark}
\newtheorem{remark}{Remark}
\setlist[description]{style=multiline}
\begin{document}
\sloppy

\title{A Scalable Framework for\\ Wireless Distributed Computing} 

\author{Songze~Li,~\IEEEmembership{Student~Member,~IEEE,}
        Qian~Yu,
        Mohammad~Ali~Maddah-Ali,~\IEEEmembership{Member,~IEEE,}
        and~A.~Salman~Avestimehr,~\IEEEmembership{Senior Member,~IEEE}
\thanks{S.~Li, Q.~Yu and A.S.~Avestimehr are with the Department of Electrical Engineering, University of Southern California, Los Angeles, CA, 90089, USA (e-mail: songzeli@usc.edu;  qyu880@usc.edu; avestimehr@ee.usc.edu).}
\thanks{M. A. Maddah-Ali is with Department of Electrical Engineering, Sharif University of Technology (e-mail: maddah\_ali@sharif.edu).}
\thanks{A part of this paper was presented in IEEE/ACM SEC, 2016~\cite{LYMA-SEC2016}. A shorter version of this paper was presented in IEEE GLOBECOM, 2016~\cite{LQMA_globecom}.} }

\maketitle

\begin{abstract}
We consider a wireless distributed computing system, in which multiple mobile users, connected wirelessly through an access point, collaborate to perform a computation task. In particular, users communicate with each other via the access point to exchange their locally computed intermediate computation results, which is known as \emph{data shuffling}. We propose a scalable framework for this system, in which the required communication bandwidth for data shuffling does not increase with the number of users in the network. The key idea is to utilize a particular repetitive pattern of placing the dataset (thus a particular repetitive pattern of intermediate computations), in order to provide coding opportunities at both the users and the access point, which reduce the required uplink communication bandwidth from users to access point and the downlink communication bandwidth from access point to users by factors that grow linearly with the number of users. We also demonstrate that the proposed dataset placement and coded shuffling schemes are optimal (i.e., achieve the minimum required shuffling load) for both a centralized setting and a decentralized setting, by developing tight information-theoretic lower bounds.
\end{abstract}

\begin{IEEEkeywords} 
Wireless Distributed Computing, Edge Computing, Coding, Information Theory, Scalability
\end{IEEEkeywords}

\section{Introduction}\label{sec:intro}
Recent years have witnessed a rapid growth of computationally intensive applications on mobile devices, such as mapping services, voice/image recognition, and augmented reality. The current trend for developing these applications is to offload computationally heavy tasks to a ``cloud'', which has greater computational resources. While this trend has its merits, there is also a critical need for enabling \emph{wireless distributed computing}, in which computation is carried out using the computational resources of a cluster of wireless devices collaboratively. Wireless distributed computing eliminates, or at least de-emphasizes, the need for a core computing environment (i.e., the cloud), which is critical in several important applications, such as autonomous control and navigation for vehicles and drones, in which access to the cloud can be very limited. Also as a special case of the emerging ``Fog computing architecture''~\cite{bonomi2012fog}, it is expected to provide significant advantages to users by improving the response latency, increasing their computing capabilities, and enabling complex applications in machine learning, data analytics, and autonomous operation (see e.g.,~\cite{drolia2013case,datla2012wireless,huerta2010virtual}).

Without the help from a centralized cloud, the local computing capability of a wireless device is often limited by its local storage size. For example, for a mobile navigation application in which a smart car wants to compute the fastest route to its destination, over a huge dataset containing the map information and the traffic conditions over a period of time, the local storage size of an individual car is too small to store the entire dataset, and hence individual processing is not feasible. However, using a wireless distributed computing framework, a group of smart cars, each storing a part of the dataset, can collaborate to meet their respective computational needs over the entire dataset.

The major challenge in developing a scalable framework for wireless distributed computing is the significant communication load, required to exchange the intermediate computation results among the mobile users. In fact, even when the processing nodes are connected via high-bandwidth inter-server communication bus links (e.g., a Facebook's Hadoop cluster), it is observed in~\cite{chowdhury2011managing} that 33\% of the job execution time is spent on data shuffling. The communication bottleneck is expected to get much more severe as we move to a wireless medium where the communication resources are much more scarce. More generally, as the network size increases, while the computation resources grow linearly with network size, the overall communication bandwidth is fixed and can become the bottleneck. This raises the following fundamental question.
\begin{tcolorbox}
Is there a \emph{scalable} framework for wireless distributed computing, in which the required communication load is fixed and independent of the number of users?
\end{tcolorbox}

Our main contribution is to provide an affirmative answer to this question by developing a framework for wireless distributed computing that utilizes redundant computations at the users, in order to create coding opportunities that reduce the required communication, achieving a scalable design. The developed framework can be considered as an extension of our previously proposed \emph{coded distributed computing} framework for a wireline setting in~\cite{LMA_all,LMA_ISIT16,li2016fundamental,li2017coded}, into the wireless distributed computing domain. To develop such a framework, we exploit three opportunities in conjunction:

\begin{enumerate}
\item \emph{Side-Information:} When a sub-task has been processed in more than one node, the resulting intermediate outcomes will be available in all those nodes as side-information. This provides some opportunities for \emph{coding} across the results and creates packets that are useful for multiple nodes. 
\item \emph{Coding:} We use coding to develop packets useful to more than one mobile users. This allows us to exploit the \emph{multicasting} environment of the wireless medium and save communication overhead. 
\item \emph{Multicasting:} Wireless medium by nature is a multicasting environment. It means that when a signal is transmitted, it can be heard by all the nodes. We exploit this phenomenon by creating and transmitting signals that help several user nodes simultaneously.
\end{enumerate}

\begin{figure}[htbp]
   \centering
   \subfigure[Uplink.]{\includegraphics[width=0.4\textwidth]{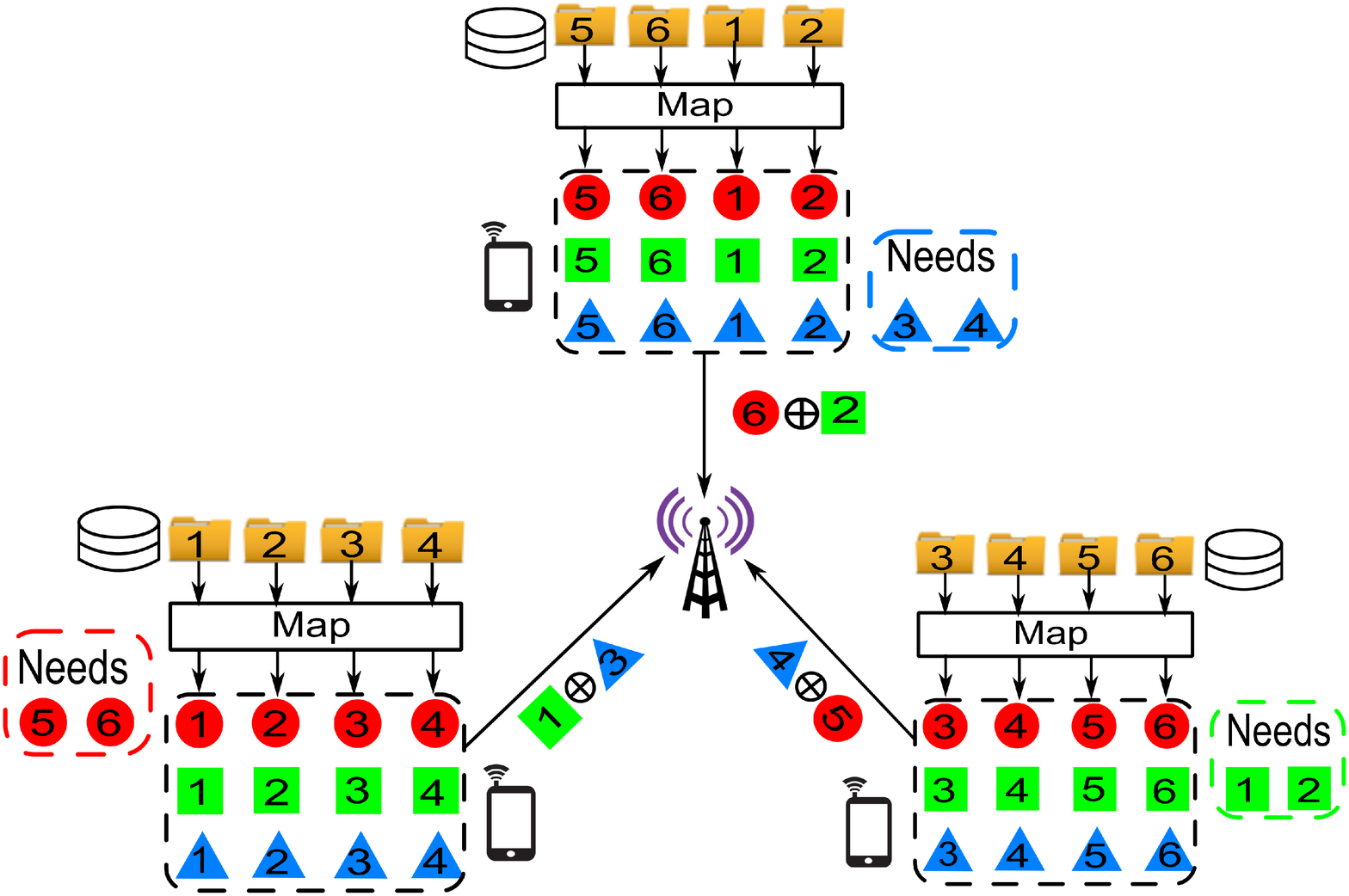}
       \label{fig:example_up}} 
    \hfill
   \subfigure[Downlink.]{\includegraphics[width=0.4\textwidth]{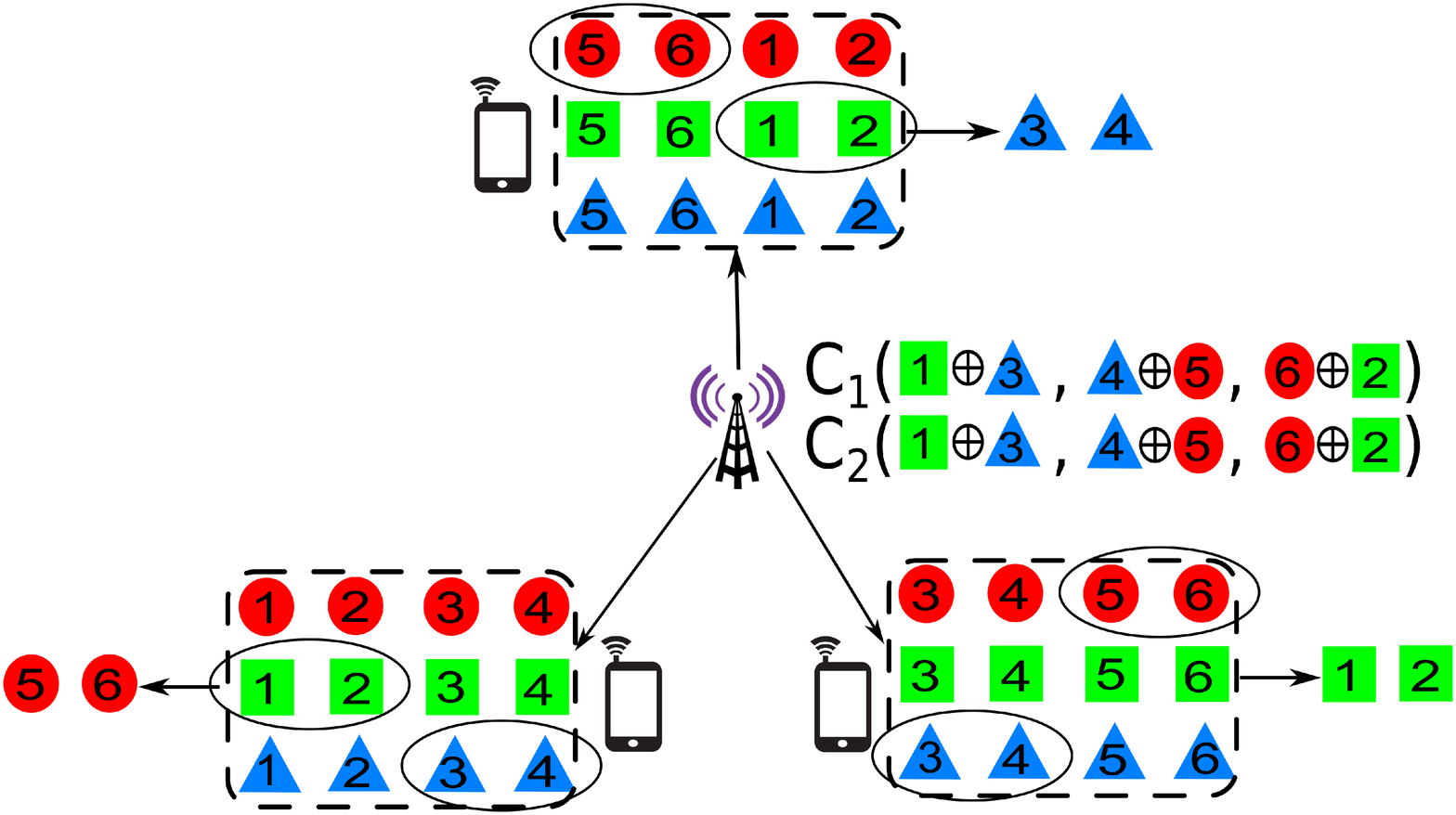}
       \label{fig:example_down}}
   \caption{Illustration of the CWDC scheme for an example of 3 mobile users.}
   \label{fig:example}
\end{figure}

\subsection{Motivating Example}\label{sec:example}
Let's first illustrate our scalable design of wireless distributed computing through an example. Consider a scenario where 3 mobile users want to run an application (e.g., image recognition). Each user has an input (e.g., an image) to process using a dataset (e.g., a feature repository of objects) provided by the application. However, the local memory of an individual user is too small to store the entire dataset, and they have to collaboratively perform the computation. The entire dataset consists of 6 equally-sized files, and each user can store at most 4 of them. The computation is performed distributedly following a commonly used MapReduce-like distributed computing structure (see e.g., MapReduce~\cite{dean2004mapreduce} and Spark~\cite{zaharia2010spark}). More specifically, every user computes a Map function, for each of the 3 inputs and each of the 4 files stored locally, generating 12 intermediate values. Then the users communicate with each other via an access point they all wirelessly connect to, which we call \emph{data shuffling}. After the data shuffling, each user knows the intermediate values of his own input in all 6 files, and passes them to a Reduce function to calculate the final output result.

During data shuffling, since each user already has 4 out of 6 intended intermediate values locally, she would need the remaining 2 from the other users. Thus, one would expect a communication load of 6 (in number of intermediate values) on the uplink from users to the access point and 6 on the downlink in which the access point simply forwards the intermediate values to the intended users. However, we can take advantage of the opportunities mentioned above to significantly reduce the communication loads. As illustrated in Fig.~\ref{fig:example}, through careful placement of the dataset into users' memories, we can design a coded communication scheme in which every user sends a bit-wise XOR, denoted by $\oplus$, of 2 intermediate values on the uplink, and then the access point, without decoding any individual value, simply generates 2 random linear combinations $C_1(\cdot,\cdot,\cdot)$ and $C_2(\cdot,\cdot,\cdot)$ of the received messages and broadcasts them to the users, simultaneously satisfying all data requests. Using this coded approach, we achieve an uplink communication load of 3 and a downlink communication load of 2.

We generalize the above example by designing a coded wireless distributed computing (CWDC) framework that applies to arbitrary type of applications, network size and storage size. In particular, we propose a specific dataset placement strategy, and a joint uplink-downlink communication scheme exploiting coding at both the mobile users and the access point. 

\begin{tcolorbox}
For a distributed computing application with $K$ users each can store $\mu$ fractions of the dataset, the proposed CWDC scheme achieves the (normalized) communication loads
\begin{align}
L_{\text{uplink}} \approx L_{\text{downlink}} \approx \tfrac{1}{\mu}-1.
\end{align}
\end{tcolorbox}

We note that the proposed scheme is scalable since the achieved communication loads are independent of $K$. As we show in Fig.~\ref{fig:load-storage}, compared with a conventional uncoded shuffling scheme with a communication load $\mu K \cdot (\tfrac{1}{\mu}-1)$ that explodes as the network expands, the proposed CWDC scheme reduces the load by a multiplicative factor of $\Theta(K)$.

\begin{figure}[htbp]
   \centering
   \subfigure[Uplink.]{\includegraphics[width=0.36\textwidth]{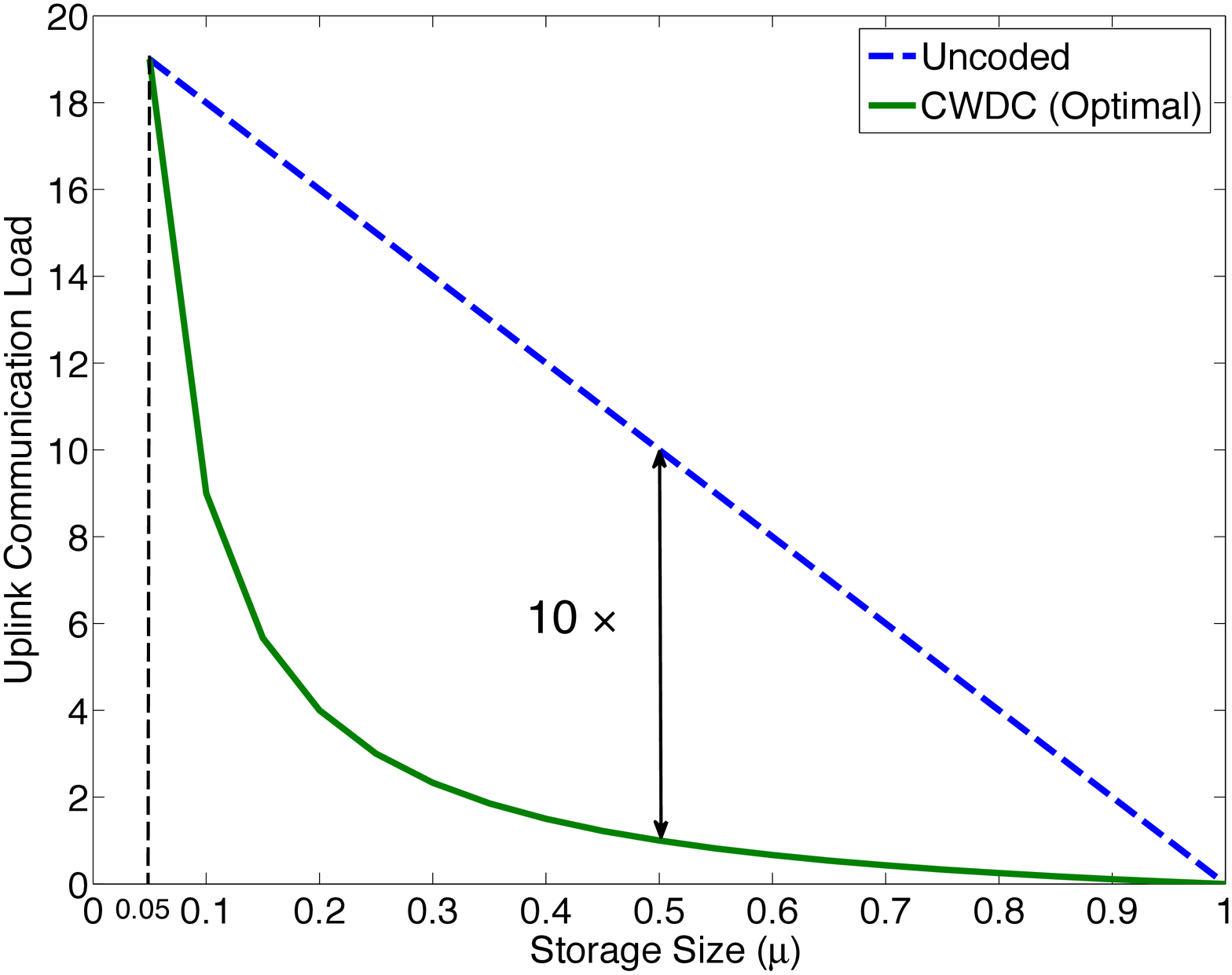}
       \label{fig:load_up}} 
   \subfigure[Downlink.]{\includegraphics[width=0.36\textwidth]{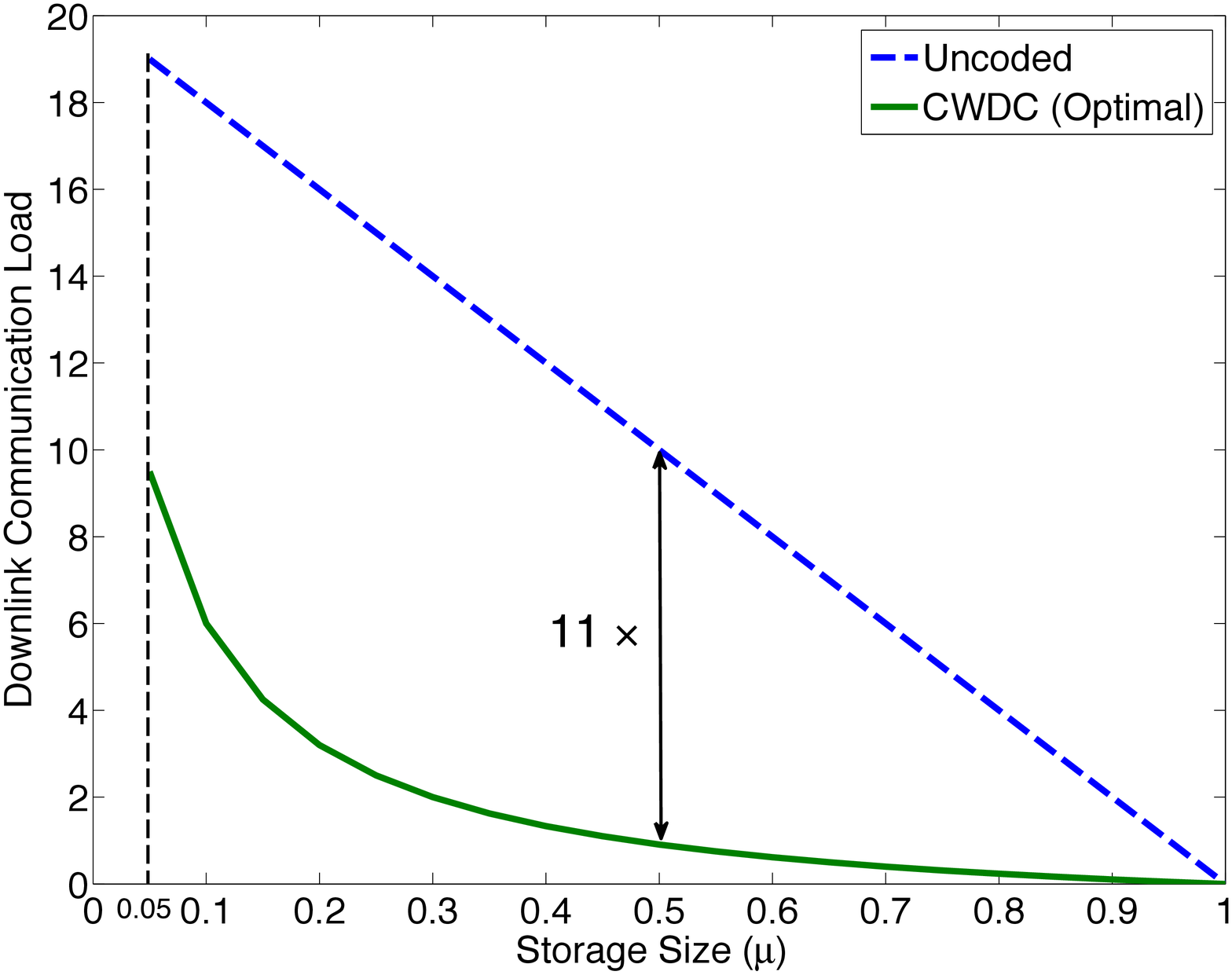}
       \label{fig:load_down}}
   \caption{Comparison of the communication loads achieved by the uncoded scheme with those achieved by the proposed CWDC scheme, for a network of $K=20$ users.
   }
   \label{fig:load-storage}
\end{figure}

We also extend our scalable design to a decentralized setting, in which the dataset placement is done at each user independently without knowing other collaborating users. For such a common scenario in mobile applications, we propose a decentralized scheme with a communication load close to that achieved in the centralized setting, particularly when the number of participating users is large.

Finally, we demonstrate that for both the centralized setting and the decentralized setting with a large number of users, the proposed CWDC schemes achieve the minimum possible communication loads that cannot be improved by any other scheme, by developing tight lower bounds.

\subsection{Prior Works}
The idea of applying coding in shuffling the intermediate results of MapReduce-like distributed computing frameworks was recently proposed in~\cite{LMA_all,LMA_ISIT16,li2016fundamental,li2017coded}, for a wireline scenario where the computing nodes can directly communicate with each other through a shared link. In this paper, we consider a wireless distributed computing environment, in which wireless computing nodes communicate via an access point. More specifically, we extend the coded distributed computing framework in~\cite{LMA_all,LMA_ISIT16,li2016fundamental,li2017coded} in the following aspects.
\begin{itemize}
\item We extend the wireline setting in~\cite{LMA_ISIT16,li2016fundamental} to a wireless setting, and develop the first scalable framework with constant communication loads for wireless distributed computing. 
\item During data shuffling, other than designing codes at the users for uplink communication, we also design novel optimum code at the access point for downlink communication.
\item We consider a decentralized setting that is of vital importance for wireless distributed computing, where each user has to decide its local storage content independently. We develop optimal decentralized dataset placement strategy and uplink-downlink communication scheme to achieve the minimum communication loads asymptotically.
\end{itemize} 

The idea of efficiently creating and exploiting coded multicasting opportunities was initially proposed in the context of cache networks in~\cite{maddah2014fundamental, maddah2013decentralized},  and extended in~\cite{ji2014fundamental, karamchandani2014hierarchical}, where caches pre-fetch part of the content in a way to enable coding during the content delivery, minimizing the network traffic. In this paper, we demonstrate that such coding opportunities can also be utilized to significantly reduce the communication load of wireless distributed computing applications. However, the proposed coded framework for wireless distributed computing differs significantly from the coded caching problems, mainly in the follow aspects.
\begin{itemize}
\item In~\cite{maddah2014fundamental, maddah2013decentralized}, a central server has the entire dataset and broadcasts coded messages to satisfy users' demands. In this work, the access point neither stores any part of the dataset, nor performs any computation. We designed new codes at both the users and the access point for data shuffling. 
\item The cache contents are placed without knowing the users' demands in the coded caching problems, while here the dataset placement is performed knowing that each user has her own unique computation request (input).
\item Our scheme is faced with the challenge of symmetric computation enforced by the MapReduce-type structure, i.e., a Map function computes intermediate values for all inputs. Such symmetry is not enforced in coded caching problems. 
 \end{itemize}

Other than reducing the communication load, a recent work~\cite{lee2015speeding} has also proposed to use codes to deal with the stragglers for a specific class of distributed computing jobs (e.g., matrix multiplication), and the optimal assignment of the coded tasks in a heterogeneous computing environment was addressed in~\cite{reisizadehmobarakeh2017coded}. Additionally, for these computing jobs, a unified coding framework was recently proposed in \cite{LMA16_unify} to achieve a tradeoff between the latency of computation and the load of communication, on which the scheme in \cite{li2016fundamental} and the scheme in \cite{lee2015speeding} achieve the two end points, minimizing the communication load and the computation latency respectively.

There have also been several recent works on communication design and resource allocation for mobile-edge computation offloading (see e.g., \cite{barbarossa2014communicating,khalili2016inter}), in which a part of the computation is offloaded to clouds located at the edges of cellular networks. In this scenario, recent works \cite{li2017codingfog,LMA_ISIT17} have proposed to exploit coding in edge processing to reduce the load of computation and improve the spectral efficiency. In contrast to the computation offloading model, in this paper, our focus is on the scenario that the ``edge'' only facilitates the communication required for distributed computing, and all computations are done \emph{distributedly} at the users.

\section{System Model}\label{sec:def}
	We consider a system that has $K$ mobile users, for some $K \in \mathbb{N}$. As illustrated in Fig.~\ref{fig:network}, all users are connected wirelessly to an access point (e.g., a cellular base station or a Wi-Fi router). The uplink channels of the $K$ users towards the access point are orthogonal to each other, and the signals transmitted by the access point on the downlink are received by all the users.
	
	\begin{figure}[htbp]
		\centering
		\includegraphics[width=0.48\textwidth]{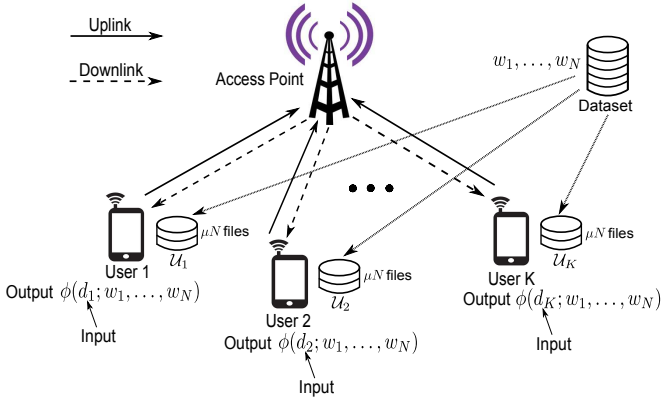}
		\caption{A wireless distributed computing system.}
		\label{fig:network}
	\end{figure}
		
	The system has a dataset (e.g., a feature repository of objects in a image recognition application) that is evenly partitioned into $N$ files $w_1,\ldots,w_N \in \mathbb{F}_{2^F}$, for some $N,F \in \mathbb{N}$. Every User~$k$ has a length-$D$ input $d_k \in \mathbb{F}_{2^D}$ (e.g., user's image in the image recognition application) to process using the $N$ files. To do that, as shown in Fig.~\ref{fig:network}, User~$k$ needs to compute 
	\begin{equation}\label{eq:output}
	\phi(\underbrace{d_k}_{\text{input}};\underbrace{w_1,\ldots,w_N}_{\text{dataset}}),
	\end{equation}
where $\phi: \mathbb{F}_{2^D} \times (\mathbb{F}_{2^F})^N \rightarrow \mathbb{F}_{2^B}$ is an output function that maps the input $d_k$ to an output result (e.g., the returned result after processing the image) of length $B \in \mathbb{N}$.

We assume that every mobile user has a local memory that can store up to $\mu$ fractions of the dataset (i.e., $\mu N$ files), for some constant parameter $\mu$ that does not scale with the number of users $K$. Throughout the  paper, we consider the case $\frac{1}{K} \leq \mu < 1$, such that each user does not have enough storage for the entire dataset, but the entire dataset can be stored collectively across all the users. We denote the set of indices of the files stored by User~$k$ as ${\cal U}_k$.
The selections of ${\cal U}_k$s are design parameters, and we denote the design of ${\cal U}_1,\ldots,{\cal U}_K$ as \emph{dataset placement}. The dataset placement is performed in prior to the computation (e.g., users download parts of the feature repository when installing the image recognition application). 
				
\begin{remark}
The employed physical-layer network model is rather simple and one can do better using a more detailed model and more advanced techniques. However we note that any wireless medium can be converted to our simple model using (1) TDMA on uplink; and (2) broadcast at the rate of weakest user on downlink. Since the goal of the paper is to introduce a ``coded'' framework for scalable wireless distributed computing, we decide to abstract out the physical layer and focus on the amount of data needed to be communicated.  $\hfill \square$
\end{remark}
												
\noindent {\bf Distributed Computing Model.} Motivated by prevalent distributed computing structures like MapReduce~\cite{dean2004mapreduce} and Spark~\cite{zaharia2010spark}, we assume that the computation for input $d_k$ can be decomposed as
\begin{equation}\label{eq:decom}
\phi(d_k;w_1,\ldots,w_N) = h(g_1(d_k;w_1),\ldots,g_N(d_k;w_N)),
\end{equation}
where as illustrated in Fig.~\ref{fig:frame},
\begin{itemize}
\item The ``Map'' functions $g_n(d_k;w_n): \mathbb{F}_{2^D} \times \mathbb{F}_{2^F} \!\rightarrow \! \mathbb{F}_{2^T}$, $n \in \{1,\ldots,N\}$, $k \in \{1,\ldots,K\}$, maps the input $d_k$ and the file $w_n$ into an \emph{intermediate value} $v_{k,n}\!=\!g_n(d_k;w_n)\!\in \! \mathbb{F}_{2^T}$, for some $T \!\in\! \mathbb{N}$,
\item The ``Reduce'' function $h: (\mathbb{F}_{2^T})^N \!\rightarrow \! \mathbb{F}_{2^B}$ maps the intermediate values for input $d_k$ in all files into the output value $\phi(d_k;w_1,\ldots,w_N)=h(v_{k,1},\ldots,v_{k,N})$, for all $k \in \{1,\ldots,K\}$.
\end{itemize}
	
\begin{remark}
	Note that for every set of output functions such a Map-Reduce decomposition exists (e.g., setting $g_n$$'s$ to identity and $h$ to $\phi(d_k;*)$). However, such a decomposition is not unique, and in the distributed computing literature, there has been quite some work on developing appropriate decompositions of computations like join, sorting and matrix multiplication (see e.g., \cite{dean2004mapreduce,rajaraman2011mining}), which are suitable for efficient distributed computing. Here we do not impose any constraint on how the Map and Reduce functions are chosen (for example, they can be arbitrary \emph{linear} or \emph{non-linear} functions).  $\hfill \square$
	\end{remark}
	
	\begin{figure}[htbp]
		\centering
		\includegraphics[width=0.48\textwidth]{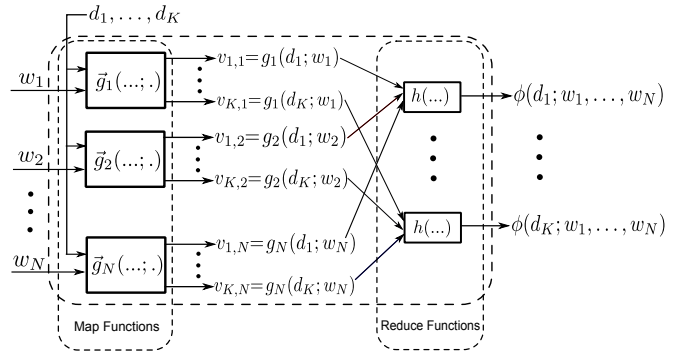}
		\caption{A two-stage distributed computing framework decomposed into Map and Reduce functions.}
		\label{fig:frame}
	\end{figure}
	
We focus on the applications in which the size of the users' inputs is much smaller than the size of the computed intermediate values, i.e., $D \ll T$. As a result, the overhead of disseminating the inputs is negligible, and we assume that the users' inputs $d_1,\ldots,d_K$ are known at each user before the computation starts.
	
\begin{remark}
The above assumption holds for various wireless distributed computing applications. For example, in a mobile navigation application, an input is simply the addresses of the two end locations. The computed intermediate results contain all possible routes between the two end locations, from which the shortest one (or the fastest one considering the traffic condition) is computed for the user. Similarly, for a set of ``filetring'' applications like the aforementioned image recognition (or similarly augmented reality) and recommendation systems, the inputs are light-weight queries (e.g., the feature vector of an image) that are much smaller than the filtered intermediate results containing all attributes of related information. For example, an input can be multiple words describing the type of restaurant a user is interested in, and the intermediate results returned by a recommendation system application can be a list of relevant information that include customers' comments, pictures, and videos of the recommended restaurants.
$\hfill \square$
\end{remark}

Following the decomposition in (\ref{eq:decom}), the overall computation proceeds in three phases: \emph{Map}, \emph{Shuffle}, and \emph{Reduce}.
	
	\noindent {\bf Map Phase:} 
	User $k$, $k\in \{1,\ldots,K\}$ computes the Map functions of $d_1,\ldots,d_K$ based on the files in $\mathcal{U}_k$. For each input $d_k$ and each file $w_n$ in $\mathcal{U}_k$, User $k$ computes $g_n(d_k,w_n)=v_{k,n}$. 
			
	\noindent {\bf Shuffle Phase:} 
	In order to compute the output value for the input $d_k$, User $k$ needs the intermediate values that are \emph{not} computed \emph{locally} in the Map phase, i.e., $\{v_{k,n}: n \notin \mathcal{U}_k\}$.Users exchange the needed intermediate values via the access point they all wirelessly connect to. As a result, the Shuffle phase breaks into two sub-phases: \emph{uplink communication} and \emph{downlink communication}.
	
	On the uplink, user $k$ creates a message $W_k$ as a function of the intermediate values computed locally, i.e., $W_k = \psi_k\left(\{v_{k,n}:k \in \{1,\ldots,K\}, n \in \mathcal{U}_k\}\right)$, and communicates $W_k$ to the access point. 
	
	\begin{definition}[Uplink Communication Load]
	We define the \emph{uplink communication load}, denoted by $L_{u}$, as the total number of bits in all uplink messages $W_1,\ldots,W_K$, normalized by the number of bits in the $N$ intermediate values required by a user (i.e., $NT$). 
	\end{definition}
	
	We assume that the access point does not have access to the dataset. Upon decoding all the uplink messages $W_{1},\ldots,W_{K}$, the access point generates a message $X$ from the decoded uplink messages, i.e., $X = \rho(W_{1},\ldots,W_{K})$, then broadcasts $X$ to all users on the downlink.
	
	\begin{definition}[Downlink Communication Load]
		We define the \emph{downlink communication load}, denoted by $L_d$, as the number of bits in the downlink message $X$, normalized by $NT$.   
	\end{definition}
	
	\noindent {\bf Reduce Phase:} 
	User $k$, $k\in\{1,\ldots,K\}$ uses the locally computed results $\{\vec{g}_n: n\in \mathcal{U}_k\}$ and the decoded downlink message $X$ to construct the inputs to the corresponding Reduce function, and calculates the output value $\phi(d_k;w_1,\ldots,w_N)=h(v_{k,1},\ldots, v_{k,N})$. 
	
{\bf Example (Uncoded Scheme).} As a benchmark, we consider an uncoded scheme, where each user receives the needed intermediate values sent uncodedly by some other users and forwarded by the access point, achieving the communication loads $L_u^{\textup{uncoded}}(\mu)  = L_d^{\textup{uncoded}}(\mu) = \mu K \cdot (\tfrac{1}{\mu}-1)$.
 
We note that the above communication loads of the uncoded scheme grow with the number of users $K$, overwhelming the limited spectral resources. In this paper, we argue that by utilizing coding at the users and the access point, we can accommodate any number of users with a \emph{constant} communication load. Particularly, we propose in the next section a scalable coded wireless distributed computing (CWDC) scheme that achieves the minimum possible uplink and downlink communication load simultaneously, i.e., 
\begin{align}
L_u^{\textup{coded}} &= L_u^{\textup{optimum}} \approx \tfrac{1}{\mu}-1, \\
L_d^{\textup{coded}} &= L_d^{\textup{optimum}} \approx  \tfrac{1}{\mu}-1.
\end{align}

\section{The Proposed CWDC Scheme}\label{sec:scheme}
In this section, we present the proposed CWDC scheme for a \emph{centralized} setting, in which the dataset placement is designed in a centralized manner knowing the number and the identities of the users that will participate in the computation. We first consider the storage size $\mu  \in \{\frac{1}{K},\frac{2}{K},\ldots,1\}$ such that $\mu K \in \mathbb{N}$. We assume that $N$ is sufficiently large such that $N = {K \choose \mu K}\eta$ for some $\eta \in \mathbb{N}$. 

\noindent {\bf Dataset Placement and Map Phase Execution.} We evenly partition the indices of the $N$ files into ${K \choose \mu K}$ disjoint batches, each containing the indices of $\eta$ files. We denote a batch of file indices as ${\cal B}_{\cal T}$, which is labelled by a unique subset $\mathcal{T} \subset \{1,\ldots,K\}$ of size $|{\cal T}|=\mu K$. As such defined, we have
\begin{equation}
\{1,\ldots,N\} \!=\! \{i: i \in \mathcal{B}_{\cal T}, {\cal T} \subset \{1,\ldots,K\}, |{\cal T}|=\mu K \}.
\end{equation}
User~$k$, $k \in \{1,\ldots,K\}$, stores locally all the files whose indices are in $\mathcal{B}_{\cal T}$ if $k \in \mathcal{T}$. That is, 
\begin{align}
{\cal U}_k = \underset{{\cal T}: |{\cal T}|=\mu K, k \in {\cal T}}{\cup} {\cal B}_{\cal T}.
\end{align}
As a result, each of the $N$ files is stored by $\mu K$ distinct users.
After the Map phase, User $k$, $k \in \{1,\ldots,K\}$, knows the intermediate values of all $K$ output functions in each file whose index is in $\mathcal{U}_k$, i.e., $\{v_{q,n}:q \in \{1,\ldots,K\}, n \in \mathcal{U}_k\}$.

In the example in Section~\ref{sec:example}, the indices of the 6 files are partitioned into ${3 \choose 2}=3$ batches, each containing the indices of 2 files. Each user stores the files whose indices are in 2 out of the 3 batches. Hence, each user stores a total of 4 files.

\noindent {\bf Uplink Communication.} For any subset ${\cal W}\subset \{1,\ldots,K\}$, and any $k \notin {\cal W}$, we denote the set of intermediate values needed by User $k$ and known \emph{exclusively} by users in $\mathcal{W}$ as $\mathcal{V}_{\mathcal{W}}^{k}$. More formally:
\begin{equation}\label{eq:V}
\mathcal{V}_{\mathcal{W}}^{k} \triangleq \{v_{k,n}: n \in \underset{i \in {\cal W}}{\cap} \mathcal{U}_i, n \notin \underset{i \notin {\cal W}}{\cup} \mathcal{U}_i\}.
\end{equation}

In the example in Section~\ref{sec:example}, we have $\mathcal{V}_{\{2,3\}}^{1}=\{v_{1,5},v_{1,6}\}$, $\mathcal{V}_{\{1,3\}}^{2}=\{v_{2,1},v_{2,2}\}$ and $\mathcal{V}_{\{1,2\}}^{3}=\{v_{3,3},v_{3,4}\}$.

For all subsets $\mathcal{S} \subseteq \{1,\ldots,K\}$ of size $\mu K+1$:
\begin{enumerate}
\item For each User $k \in \mathcal{S}$, $\mathcal{V}_{\mathcal{S}\backslash \{k\}}^{k}$ is the set of intermediate values that are requested by User~$k$ and are in the files whose indices are in the batch ${\cal B}_{\mathcal{S}\backslash \{k\}}$, and they are exclusively known at all users whose indices are in $\mathcal{S}\backslash \{k\}$. We evenly and arbitrarily split $\mathcal{V}_{\mathcal{S}\backslash \{k\}}^{k}$, into $\mu K$ disjoint segments $\{\mathcal{V}_{\mathcal{S} \backslash \{k\},i}^{k}\!:\! i \in {\cal S} \backslash \{k\}\}$, where $\mathcal{V}_{\mathcal{S} \backslash \{k\},i}^{k}$ denotes the segment associated with User~$i$ in ${\cal S} \backslash \{k\}$ for User~$k$. That is, $\mathcal{V}^{k}_{\mathcal{S}\backslash \{k\}} \!=\! \underset{i \in {\cal S} \backslash \{k\}}{\cup} \mathcal{V}_{\mathcal{S} \backslash \{k\},i}^{k}$.
\item User $i$, $i \in \mathcal{S}$, sends the bit-wise XOR, denoted by $\oplus$, of all the segments associated with it in ${\cal S}$, i.e., User $i$ sends the coded segment $W_i^{\cal S} \triangleq \underset{k \in \mathcal{S} \backslash \{i\}}\oplus \mathcal{V}^{k}_{\mathcal{S}\backslash \{k\},i}$.
\end{enumerate}

Since the coded message $W_i^{\cal S}$ contains $\frac{\eta}{\mu K} T$\footnote{Here we assume that $T$ is sufficiently large such that $\frac{T}{\mu K} \in \mathbb{N}$.} bits for all $i \in {\cal S}$, there are a total of $\frac{(\mu K+1)\eta}{\mu K}T$ bits communicated on the uplink in every subset ${\cal S}$ of size $\mu K+1$. Therefore, the uplink communication load achieved by this coded scheme is $L_u^{\textup{coded}}(\mu) = \tfrac{{K\choose \mu K+1} (\mu K+1) \cdot \eta \cdot T}{\mu K  \cdot NT} =\tfrac{1}{\mu}-1, \; \mu \in \{\tfrac{1}{K},\tfrac{2}{K},\ldots,1\}$.

\noindent {\bf Downlink Communication.} For each subset $\mathcal{S} \subseteq \{1,\ldots,K\}$ of size $\mu K+1$, and ${\cal S} = \{i_1,i_2,\ldots,i_{\mu K+1}\}$, the access point computes $\mu K$ random linear combinations of the uplink messages generated based on the subset ${\cal S}$: $C^{\cal S}_j(W_{i_1}^{\cal S},W_{i_2}^{\cal S},\ldots, W_{i_{\mu K+1}}^{\cal S}), \; j = 1,\ldots, \mu K$, and multicasts them to all users in ${\cal S}$.

Since each linear combination contains $\frac{\eta}{\mu K} T$ bits, the coded scheme achieves a downlink communication load $L_d^{\textup{coded}}(\mu) \!=\! \tfrac{{K\choose \mu K+1} \eta \cdot T}{NT} \!=\! \tfrac{\mu K}{\mu K+1} \!\cdot\! (\tfrac{1}{\mu}-1), \, \mu \in \{\tfrac{1}{K},\tfrac{2}{K},\ldots,1\}$.

After receiving the random linear combinations $C^{\cal S}_1,\ldots,C^{\cal S}_{\mu K}$, User $i$, $i \in {\cal S}$, cancels all segments she knows locally, i.e., $\underset{k \in {\cal S} \backslash \{i\}}{\cup}\{\mathcal{V}_{\mathcal{S} \backslash \{k\},j}^{k}: j \in \mathcal{S} \backslash \{k\}\}$. Consequently, User~$i$ obtains $\mu K$ random linear combinations of the required $\mu K$ segments $\{\mathcal{V}^{i}_{\mathcal{S}\backslash \{i\},j}: j \in {\cal S}\backslash \{i\}\}$.

\begin{remark}
The above uplink and downlink communication schemes require coding at both the users and the access point, creating multicasting messages that are simultaneously useful for many users. Such idea of efficiently creating and exploiting \emph{coded multicast} opportunities was initially proposed in the coded caching problems in~\cite{maddah2014fundamental, maddah2013decentralized}, and extended to D2D networks in~\cite{ji2014fundamental}. While simply forwarding the coded uplink packets on the downlink can already reduce the downlink communication load by a factor of $\mu K$, performing random linear coding at the access point achieves a higher reduction factor of $\mu K+1$. We note that this type of random linear coding at the access point has been utilized before in solving network coding problems (see, e.g.,~\cite{ahlswede2000network,KM03,HKMKE03}) and bi-directional relaying problems (see, e.g., \cite{KMT-bi,AST-two}).
$\hfill \square$
\end{remark}

When $\mu K$ is not an integer, we can first expand $\mu = \alpha \mu_1 + (1-\alpha)\mu_2$ as a convex combination of $\mu_1 \triangleq \lfloor \mu K \rfloor/K$ and $\mu_2 \triangleq \lceil \mu K \rceil/K$. Then we partition the set of the $N$ files into two disjoint subsets $\mathcal{I}_1$ and $\mathcal{I}_2$ of sizes $|\mathcal{I}_1| = \alpha N$ and $|\mathcal{I}_2| = (1-\alpha) N$. We next apply the above coded scheme respectively to the files in $\mathcal{I}_1$ where each file is stored at $\mu_1K$ users, and the files in $\mathcal{I}_2$ where each file is stored at $\mu_2K$ users, yielding the following communication loads.
\begin{align}
L_u^{\textup{coded}}(\mu) &= \alpha (\tfrac{1}{\mu_1}-1)+ (1-\alpha) (\tfrac{1}{\mu_2}-1),\\
L_d^{\textup{coded}}(\mu) &=\alpha \tfrac{\mu_1 K}{\mu_1 K+1} \cdot (\tfrac{1}{\mu_1}-1) + (1-\alpha)\tfrac{\mu_2 K}{\mu_2 K+1} \cdot (\tfrac{1}{\mu_2}-1).
\end{align} 

Hence, for general storage size $\mu$, CWDC achieves the following communication loads.
	\begin{align}
	L_u^{\textup{coded}}(\mu)&=\textup{Conv} (\tfrac{1}{\mu}-1), \label{eq:allR-up}\\
	L_d^{\textup{coded}}(\mu)&=\textup{Conv} (\tfrac{\mu K}{\mu K+1}\cdot(\tfrac{1}{\mu}-1)), \label{eq:allR-down}
	\end{align}
where $\textup{Conv}(f(\mu))$ denotes the lower convex envelope of the points $\{(\mu,f(\mu))\!:\!\mu \in\{\frac{1}{K},\frac{2}{K},...,1\}\}$. 

We summarize the performance of the proposed CWDC scheme in the following theorem.
	\begin{theorem}
	For a wireless distributed computing application with a dataset of $N$ files, and $K$ users that each can store $\mu \in \{\frac{1}{K},\frac{2}{K},\ldots,1\}$ fraction of the files, the proposed CWDC scheme achieves the following uplink and downlink communication loads for sufficiently large $N$. 
	\begin{align}
	L_u^{\textup{coded}}(\mu)&=\tfrac{1}{\mu}-1, \\
	 L_d^{\textup{coded}}(\mu)&=\tfrac{\mu K}{\mu K+1}\cdot(\tfrac{1}{\mu}-1).
	\end{align}
For general $\frac{1}{K} \leq \mu \leq 1$, the achieved loads are as stated in (\ref{eq:allR-up}) and (\ref{eq:allR-down}).
	\end{theorem}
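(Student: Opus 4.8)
The statement is established by the construction itself, so the plan is to verify that the dataset placement, the uplink scheme, and the downlink scheme are feasible and correct, and then to evaluate the resulting loads; the non-integer case follows by memory sharing. First I would confirm the memory constraint: User~$k$ stores exactly those batches $\mathcal{B}_{\mathcal{T}}$ with $k\in\mathcal{T}$, of which there are $\binom{K-1}{\mu K-1}$, each of $\eta$ files, for a total of $\binom{K-1}{\mu K-1}\eta=\mu\binom{K}{\mu K}\eta=\mu N$ files, i.e., the budget is met with equality. Next I would show that the sets $\mathcal{V}^{k}_{\mathcal{S}\backslash\{k\}}$, ranging over all $(\mu K+1)$-subsets $\mathcal{S}\ni k$, partition $\{v_{k,n}:n\notin\mathcal{U}_k\}$: each such $v_{k,n}$ lies in a unique batch $\mathcal{B}_{\mathcal{T}}$ with $|\mathcal{T}|=\mu K$, and $n\notin\mathcal{U}_k$ forces $k\notin\mathcal{T}$, so $\mathcal{S}=\mathcal{T}\cup\{k\}$ is the unique qualifying subset. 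Hence $|\mathcal{V}^{k}_{\mathcal{S}\backslash\{k\}}|=\eta$ and each of the $\mu K$ segments $\mathcal{V}^{k}_{\mathcal{S}\backslash\{k\},i}$ carries $\tfrac{\eta}{\mu K}T$ bits, an integer once $N$ (hence $T$) is large enough.

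For the uplink, fix $\mathcal{S}$ and $i\in\mathcal{S}$; the message $W_i^{\mathcal{S}}=\bigoplus_{k\in\mathcal{S}\backslash\{i\}}\mathcal{V}^{k}_{\mathcal{S}\backslash\{k\},i}$ is computable by User~$i$ because every file index in it belongs to a batch $\mathcal{B}_{\mathcal{S}\backslash\{k\}}$ with $i\in\mathcal{S}\backslash\{k\}$, hence to $\mathcal{U}_i$. Since the uplink channels are orthogonal, the access point recovers all the $W_i^{\mathcal{S}}$. Counting $\binom{K}{\mu K+1}$ subsets, $\mu K+1$ messages per subset, $\tfrac{\eta}{\mu K}T$ bits per message, and normalizing by $NT=\binom{K}{\mu K}\eta T$ gives $L_u=\tfrac{\binom{K}{\mu K+1}(\mu K+1)}{\mu K\binom{K}{\mu K}}=\tfrac1\mu-1$ via $\binom{K}{\mu K+1}=\tfrac{K-\mu K}{\mu K+1}\binom{K}{\mu K}$.

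The downlink correctness is the main obstacle. Fix $\mathcal{S}$ and $i\in\mathcal{S}$: User~$i$ knows $W_i^{\mathcal{S}}$ entirely, and for each $j\in\mathcal{S}\backslash\{i\}$ knows $W_j^{\mathcal{S}}$ up to the single unknown segment $\mathcal{V}^{i}_{\mathcal{S}\backslash\{i\},j}$ (the other summands lie in $\mathcal{U}_i$). After subtracting the known parts from the $\mu K$ received combinations $C^{\mathcal{S}}_1,\ldots,C^{\mathcal{S}}_{\mu K}$, User~$i$ is left with $\mu K$ linear combinations of the $\mu K$ needed segments $\{\mathcal{V}^{i}_{\mathcal{S}\backslash\{i\},j}\}_{j\in\mathcal{S}\backslash\{i\}}$, whose coefficient matrix is the $\mu K\times\mu K$ submatrix of the access point's $\mu K\times(\mu K+1)$ coefficient matrix obtained by deleting the column indexed by $i$. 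The crux is to pick the access point's coefficients so that all these submatrices --- one for each pair $(i,\mathcal{S})$ --- are simultaneously invertible; I would do this over a sufficiently large extension field $\mathbb{F}_{2^m}$, noting each submatrix determinant is a nonzero polynomial in the coefficients, so by a Schwartz--Zippel / union-bound argument a common non-vanishing assignment exists (e.g.\ a Vandermonde-type choice). Then User~$i$ inverts and recovers exactly the $\mu K$ segments it needs; ranging over all $\mathcal{S}$ delivers every $v_{k,n}$ with $n\notin\mathcal{U}_k$, so the Reduce phase proceeds. The downlink bit count is $\binom{K}{\mu K+1}\cdot\mu K\cdot\tfrac{\eta}{\mu K}T=\binom{K}{\mu K+1}\eta T$, yielding $L_d=\binom{K}{\mu K+1}/\binom{K}{\mu K}=\tfrac{\mu K}{\mu K+1}(\tfrac1\mu-1)$.

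Finally, for $\mu K\notin\mathbb{N}$ I would write $\mu=\alpha\mu_1+(1-\alpha)\mu_2$ with $\mu_1=\lfloor\mu K\rfloor/K$, $\mu_2=\lceil\mu K\rceil/K$, split the $N$ files into parts of sizes $\alpha N$ and $(1-\alpha)N$, and run the integer-case scheme on each part with the corresponding $\mu_\ell$. The per-user storage is $\alpha N\mu_1+(1-\alpha)N\mu_2=\mu N$, and uplink and downlink loads add across the two parts, giving the convex combinations displayed before the theorem; optimizing over all such splittings gives the lower convex envelopes in (\ref{eq:allR-up}) and (\ref{eq:allR-down}). The hypothesis ``$N$ sufficiently large'' is used only to make $\eta$, $\alpha N$, and $\tfrac{T}{\mu K}$ integers. $\hfill\blacksquare$
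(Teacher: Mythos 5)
Your proposal is correct and follows essentially the same route as the paper, whose proof of Theorem~1 is simply the scheme construction and load accounting of Section~\ref{sec:scheme} together with the memory-sharing argument for non-integer $\mu K$. The one place you go beyond the paper is in making the downlink decodability explicit --- the paper just invokes ``random linear combinations,'' whereas you correctly identify that all $\mu K\times\mu K$ column-deleted submatrices of the access point's coefficient matrix must be invertible and justify this via an MDS/Schwartz--Zippel choice, which is a welcome tightening rather than a different approach.
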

	
\begin{remark}\label{scale}
	Theorem~1 implies that, for large $K$, $L_u^{\textup{coded}}(\mu) \approx L_d^{\textup{coded}}(\mu) \approx \tfrac{1}{\mu}-1$, which is independent of the number of users. Hence, we can accommodate any number of users without incurring extra communication load, and the proposed scheme is \emph{scalable}. The reason for this phenomenon is that, as more users joint the network, with an appropriate dataset placement, we can create coded multicasting opportunities to reduce the communication loads by a factor of $\mu K$, which is the size of the aggregated memory of all users in the system, and scales linearly with $K$ ($\mu$ is a constant). Such phenomenon was also observed in the context of cache networks (see e.g.,~\cite{maddah2014fundamental}). $\hfill \square$
\end{remark}
	
\begin{remark}
As illustrated in Fig.~\ref{fig:load-storage} in Section~\ref{sec:intro}, compared with the uncoded scheme, the proposed CWDC scheme utilizes coding at the mobile users and the access point to reduce the uplink and downlink communication load by a factor of $\mu K$ and $\mu K+1$ respectively, which scale linearly with the number of users $K$. When $\mu=\frac{1}{K}$, which is the minimum storage size required to accomplish distributed computing, the CWDC scheme reduces to the uncoded scheme when the access point simply forwards the received uncoded packets. $\hfill \square$
\end{remark}

\begin{remark}
Compared with distributed computing over wired servers where we only need to design one data shuffling scheme between servers in~\cite{li2016fundamental}, here in the wireless setting we jointly design uplink and downlink shuffling schemes, which minimize both the uplink and downlink communication loads.    $\hfill \square$
\end{remark}

\begin{remark}
We can view the Shuffle phase as an instance of the index coding problem~\cite{birk2006coding,bar2011index}, in which a central server aims to design a broadcast message with minimum length to satisfy the requests of all the clients, given the clients' local side information. While a random linear network coding approach (see e.g., \cite{ahlswede2000network,KM03,HKMKE03}) is sufficient to implement any multicast communication, it is generally sub-optimal for index coding problems where every client requests different messages. However, for the considered wireless distributed computing scenario where we are given the flexibility of designing dataset placement (thus the side information), we can prove that the proposed CWDC scheme is optimum in minimizing communication loads (see Section~\ref{sec:opt}).  $\hfill \square$
\end{remark}

\begin{remark}
We note that the coding opportunities created and exploited in the proposed coded scheme belong to a type of in-network coding, which aims to combat interference in wireless networks, and deliver the information bits required by each of the users respectively with maximum spectral efficiency. This type of coding is distinct from source coding, or data compression (see e.g.,~\cite{welton2011improving}), which aims to remove the redundant information in the original intermediate values each of the users requests. Interestingly, the above proposed coded communication scheme can be applied on top of data compression. That is, we can first compress the intermediate values to minimize the number of information bits each user requests, then we apply the proposed coded communication scheme on the compressed values, in order to deliver them to intended users with minimum utilization of the wireless links.  $\hfill \square$
\end{remark}

So far, we have considered the scenario where the dataset placement is designed in a centralized manner, i.e., the dataset placement is designed knowing which users will use the application. However, a more practical scenario is that before computation, the dataset placement at each user is performed in a decentralized manner without knowing when the computation will take place and who will take part in the computation. In the next section, we describe how we can extend the proposed CWDC scheme to facilitate the computation in such a decentralized setting.

\section{The Proposed CWDC Scheme for the Decentralized Setting}\label{sec:decent}

We consider a decentralized system, in which a random and a priori unknown subset of users, denoted by ${\cal K}$, participate in the computation. The dataset placement is performed \emph{independently} at each user by \emph{randomly} storing a subset of $\mu N$ files, according to a common placement distribution $P$. In this case, we define the \emph{information loss} of the system, denoted by $\Delta$, as the fraction of the files that are not stored by any participating user. 

Once the computation starts, the participating users in ${\cal K}$ of size $K$ are fixed, and their identities are revealed to all the participating users. Then they collaboratively perform the computation as in the centralized setting. The participating users process their inputs over the available part of the dataset stored collectively by all participating users. More specifically, every user~$k$ in ${\cal K}$ now computes  	
\begin{equation}\label{eq:part}
\phi(\underbrace{d_k}_{\text{input}};\underbrace{\{w_n: n \in \underset{k \in {\cal K}}{\cup}{\cal U}_k\}}_{\text{available dataset}}).
\end{equation}

In what follows, we present the proposed CWDC scheme for the above decentralized setting, including a random dataset placement strategy, an uplink communication scheme and a downlink communication scheme.

\noindent {\bf Dataset Placement.} We use a \textit{uniformly random} dataset placement, in which every user independently stores $\mu N$ files uniformly at random. With high probability for large $N$, the information loss approximately equals $(1-\mu)^{K}$, which converges quickly to $0$ as $K$ increases.

For a decentralized random dataset placement, files are stored by random subsets of users. During data shuffling, we first \emph{greedily} categorize the available files based on the number of users that store the file, then for each category we deliver the corresponding intermediate values in an \emph{opportunistic} way using the coded communication schemes described in Section~\ref{sec:scheme} for the centralized setting.

\noindent {\bf Uplink Communication.}
For all subsets $\mathcal{S} \subseteq \{1,\ldots,K\}$ with size $|\mathcal{S}|\geq 2$:
\begin{enumerate}
	\item For each $k \in \mathcal{S}$, we evenly and arbitrarily split $\mathcal{V}_{\mathcal{S}\backslash \{k\}}^{k}$ defined in (\ref{eq:V}), into $|\mathcal{S}|\!-\!1$ disjoint segments $\mathcal{V}^{k}_{\mathcal{S}\backslash \{k\}} = \{ \mathcal{V}_{\mathcal{S} \backslash \{k\},i}^{k}: i \in {\cal S} \backslash \{k\}\}$, and associate the segment $\mathcal{V}_{\mathcal{S} \backslash \{k\},i}^{k}$ with the user $i \in {\cal S} \backslash \{k\}$. 
	\item User $i$, $i \in \mathcal{S}$, sends the bit-wise XOR, denoted by $\oplus$, of all the segments associated with it in ${\cal S}$, i.e., User $i$ sends the coded segment $W_i^{\cal S} \triangleq \underset{k \in \mathcal{S} \backslash \{i\}}\oplus \mathcal{V}^{k}_{\mathcal{S}\backslash \{k\},i}$. \footnote{Since the dataset placement is now randomized, we zero-pad all elements in $\{\mathcal{V}^{k}_{\mathcal{S}\backslash \{k\},i} :k \in {\cal S}\backslash \{i\}\}$ to the maximum length $\underset{k \in {\cal S}\backslash\{i\}}{\max}|\mathcal{V}_{\mathcal{S} \backslash \{k\},i}^k|$ in order to complete the XOR operation.}
\end{enumerate}

Using the proposed uniformly random dataset placement, for any subset $\mathcal{S}\subseteq \{1,...,K\}$, the number of files exclusively stored by all users in $\mathcal{S}$ can be characterized by $\mu^{|\mathcal{S}|}(1-\mu)^{K-|\mathcal{S}|}  N + o(N)$.

Thus, when the proposed communication scheme proceeds on a subset ${\cal S}$ of size $|{\cal S}|=j+1$ users, the resulting uplink communication load converges to $\frac{j+1}{j}\mu^{j}(1-\mu)^{K-j}$ for large $N$.

\noindent {\bf Downlink Communication.}
For all $\mathcal{S} \subseteq \{1,\ldots,K\}$ of size $|\mathcal{S}|\geq 2$, the access point computes $|\mathcal{S}|-1$ random linear combinations of the uplink messages generated based on the subset ${\cal S}$: $C^{\cal S}_j(\{W_i^{\cal S}: i \in {\cal S}\}), \; j = 1,\ldots, |\mathcal{S}|-1$, and multicasts them to all users in ${\cal S}$.

We summarize the performance of the proposed decentralized CWDC scheme in the following theorem.

\begin{theorem}
		For an application with a dataset of $N$ files, and $K$ users that each can store $\mu$ fraction of the files, the proposed decentralized CWDC scheme achieves an information loss $\Delta = \left(1-\mu\right)^K$ and the following communication loads with high probability for sufficiently large $N$.
		 \begin{align}
		 L_{\textup{decent},u}^{\textup{coded}} &= \sum_{j=1}^{K-1} \binom{K}{j+1}\frac{j+1}{j} \mu ^j \left(1-\mu\right)^{K-j},\\
		 L_{\textup{decent},d}^{\textup{coded}} &= \sum_{j=1}^{K-1} {K \choose j+1}\mu^j \left(1-\mu\right)^{K-j}. \label{eq:loadD}
		 \end{align}
\end{theorem}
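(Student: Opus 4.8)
The plan is to split the statement into three pieces — the information loss $\Delta=(1-\mu)^{K}$, correctness of the scheme, and the two communication loads — all of which I would derive from a single concentration fact about the uniformly random placement. Under this placement each of the $K$ participating users independently stores a uniformly random $\mu N$-subset of the files, so for any fixed $\mathcal{W}\subseteq\{1,\dots,K\}$ the probability that a given file is stored by \emph{exactly} the users in $\mathcal{W}$ is $\mu^{|\mathcal{W}|}(1-\mu)^{K-|\mathcal{W}|}$ up to an $O(1/N)$ correction from sampling without replacement. First I would show that the number $M_{\mathcal{W}}$ of files with storer set exactly $\mathcal{W}$ concentrates: it is a sum of indicators whose only coupling is each user's fixed storage budget, so a bounded-differences (McDiarmid/Azuma) inequality — or negative association of the indicators — gives $M_{\mathcal{W}}=\mu^{|\mathcal{W}|}(1-\mu)^{K-|\mathcal{W}|}N+o(N)$ with probability $1-o(1)$, and a union bound over the at most $2^{K}$ choices of $\mathcal{W}$ lets me condition on all the $M_{\mathcal{W}}$ being this close to their means. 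The case $\mathcal{W}=\emptyset$ then gives $\Delta=(1-\mu)^{K}+o(1)$ directly.

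Next I would establish correctness by checking that each needed-and-available intermediate value is delivered exactly once. For a file $w_{n}$ with nonempty storer set $\mathcal{A}$: if $k\in\mathcal{A}$ then User~$k$ already has $v_{k,n}$ from the Map phase, and if $k\notin\mathcal{A}$ then $v_{k,n}\in\mathcal{V}^{k}_{\mathcal{A}}=\mathcal{V}^{k}_{\mathcal{S}\setminus\{k\}}$ for the unique $\mathcal{S}=\mathcal{A}\cup\{k\}$ (of size $\ge 2$), which is handled in a single round. Inside the round on $\mathcal{S}$ I would use the key observation that every segment $\mathcal{V}^{k}_{\mathcal{S}\setminus\{k\},\ell}$ with $k\neq i$ is available at User~$i$ (its files are stored by all of $\mathcal{S}\setminus\{k\}\ni i$), so User~$i$ knows $W_{i}^{\mathcal{S}}$ in full and knows each $W_{i'}^{\mathcal{S}}$, $i'\in\mathcal{S}\setminus\{i\}$, up to the single segment $\mathcal{V}^{i}_{\mathcal{S}\setminus\{i\},i'}$ it wants; subtracting its known contributions from the $|\mathcal{S}|-1$ random linear combinations broadcast by the access point leaves User~$i$ with $|\mathcal{S}|-1$ linear combinations of its $|\mathcal{S}|-1$ wanted segments, whose coefficient matrix is a uniformly random $(|\mathcal{S}|-1)\times(|\mathcal{S}|-1)$ matrix over the symbol field $\mathbb{F}_{q}$ and hence invertible with probability $\prod_{\ell=1}^{|\mathcal{S}|-1}(1-q^{-\ell})$, bounded away from $0$ and tending to $1$ as $q\to\infty$. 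Taking $q$ (equivalently $T$, or an extension field) large enough and union-bounding over the finitely many pairs $(\mathcal{S},i)$ — or fixing coefficients outside the bad set, which has probability below $1$ — makes decoding succeed everywhere.

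Finally I would count bits, conditioned on the concentration event. In the round on an $\mathcal{S}$ with $|\mathcal{S}|=j+1$, each $\mathcal{V}^{k}_{\mathcal{S}\setminus\{k\}}$ ($k\in\mathcal{S}$) has $\mu^{j}(1-\mu)^{K-j}NT+o(NT)$ bits, all with the same leading term, so after the even split into $j$ segments and the zero-padding every uplink message and every downlink combination has length $\tfrac{1}{j}\mu^{j}(1-\mu)^{K-j}NT+o(NT)$, the padding contributing only to $o(NT)$. Summing the $j+1$ uplink messages and the $j$ downlink combinations over the $\binom{K}{j+1}$ sets of each size $j+1$, $j=1,\dots,K-1$, and normalizing by $NT$, I would obtain $L^{\textup{coded}}_{\textup{decent},u}=\sum_{j=1}^{K-1}\binom{K}{j+1}\tfrac{j+1}{j}\mu^{j}(1-\mu)^{K-j}+o(1)$ and $L^{\textup{coded}}_{\textup{decent},d}=\sum_{j=1}^{K-1}\binom{K}{j+1}\mu^{j}(1-\mu)^{K-j}+o(1)$. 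I expect the main obstacle to be the concentration step — keeping the $o(N)$ error uniform over all $2^{K}$ subsets and carrying it cleanly through the sum and the padding, since the per-user storage budget introduces dependence and rules out a one-line Chernoff argument — while the remaining bookkeeping is routine; a secondary delicate point is making the symbol field large enough that the access point's random combinations are simultaneously invertible at all users.
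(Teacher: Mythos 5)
Your proposal is correct and follows essentially the same route as the paper: uniformly random placement, concentration of the number of files with each exact storer set to $\mu^{j}(1-\mu)^{K-j}N+o(N)$, opportunistic application of the centralized per-subset coding with zero-padding, and summation of the per-subset uplink and downlink contributions. The paper simply asserts the concentration and the decodability of the access point's random linear combinations, so your added detail on bounded-differences concentration and field-size/invertibility fills in steps the paper leaves implicit rather than diverging from it.
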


\begin{figure}[htbp]
	\centering
	\subfigure[Uplink.]{\includegraphics[width=0.3\textwidth]{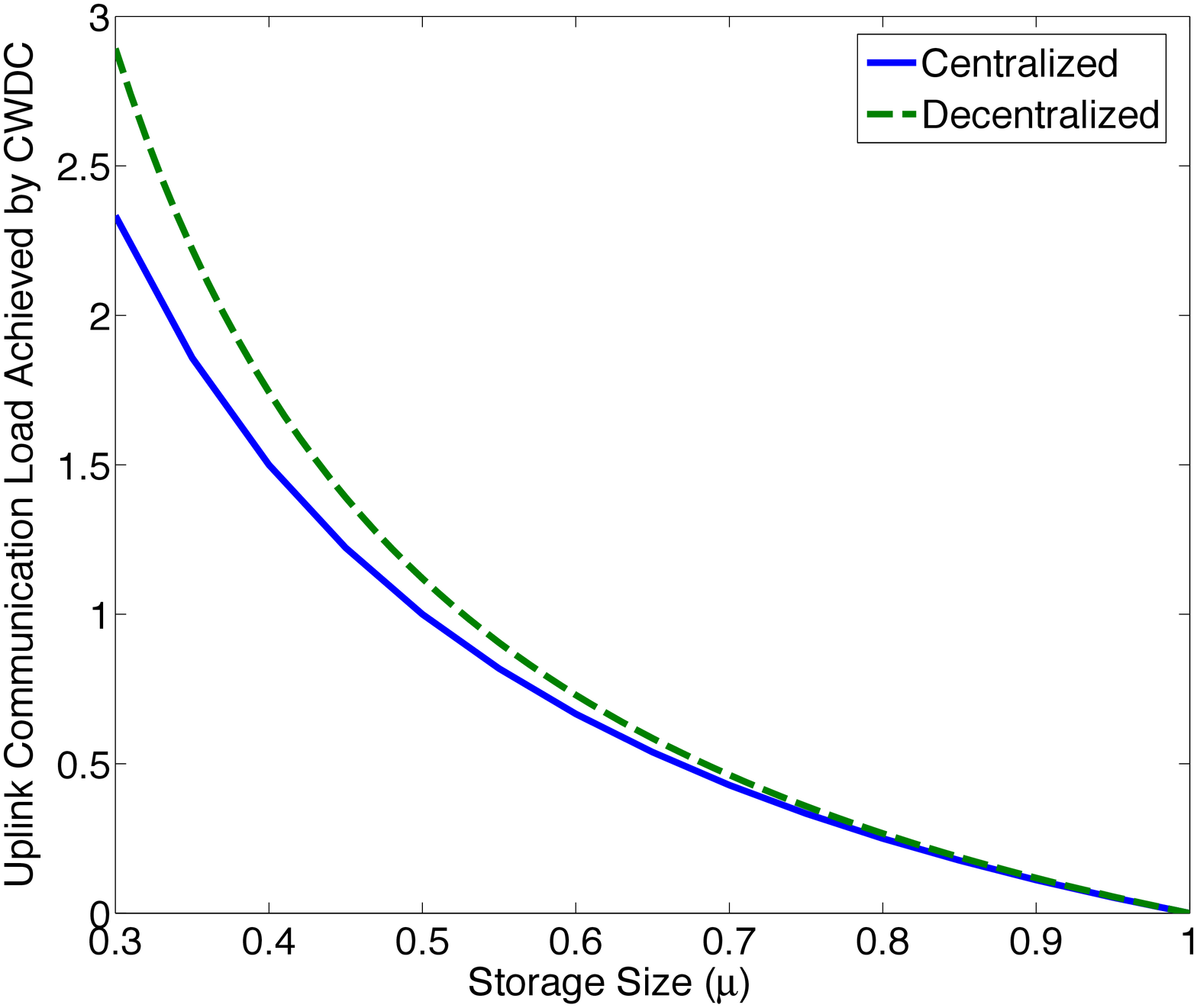}
		\label{fig:up}}
	\subfigure[Downlink.]{\includegraphics[width=0.3\textwidth]{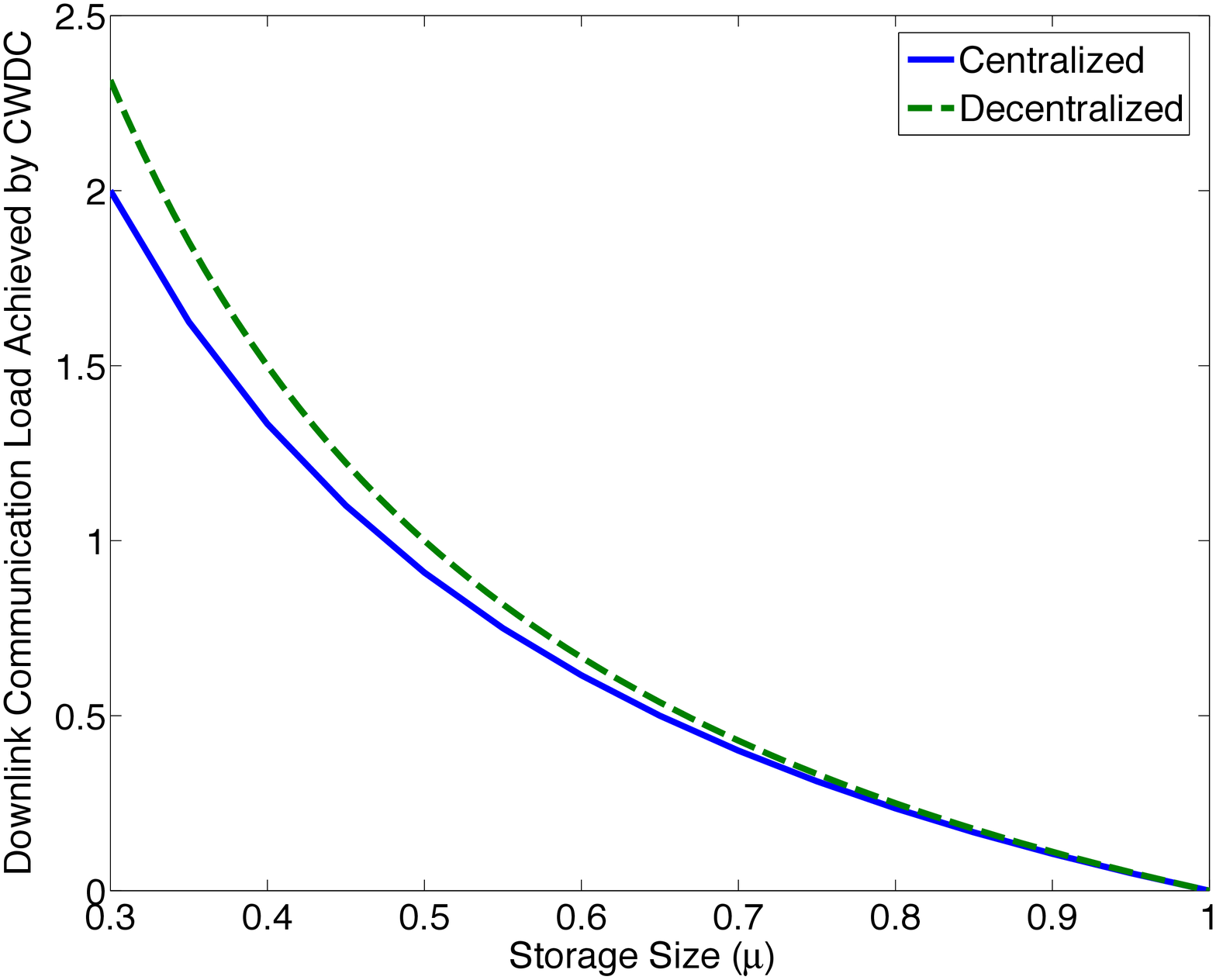}
		\label{fig:down}}
	\caption{Comparison of the communication loads achieved by the centralized and the decentralized CWDC schemes, for a network of $K=20$ participating users.}
	\label{fig:load}
\end{figure}

\begin{remark}\label{decastm}
	In Fig.~\ref{fig:load}, we numerically evaluate the communication loads achieved by the proposed centralized and decentralized schemes, in a network with 20 participating users. We observe that although the loads of the decentralized scheme are higher than those of the centralized scheme, the communication performances under these two settings are very close to each other. As $K$ becomes large, the information loss achieved by the decentralized CWDC approaches $0$, and both loads in (\ref{eq:loadD}) approach $ \frac{1}{\mu}-1$, which equals the asymptotic loads achieved by the centralized scheme (see Remark~\ref{scale}). Hence, when the number of participating users is large, there is little loss in making the system decentralized.
$\hfill \square$		
\end{remark}

\begin{figure}[htbp]
		\centering
		\includegraphics[width=0.4\textwidth]{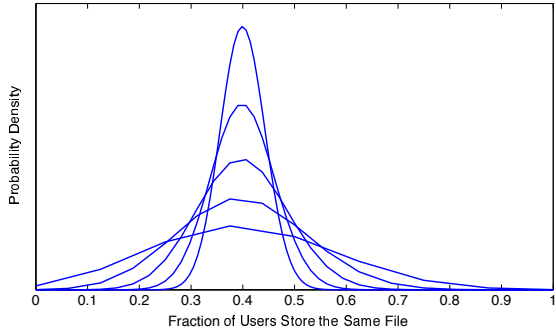}
		\caption{Concentration of the number of users each files is stored at around $\mu K$. Each curve demonstrates the normalized fraction of files that are stored by different numbers of users, for a particular number of participating users $K$. The density functions are computed for a storage size $\mu=0.4$, and for $K=2^3,...,2^7$.}
		\label{fig:dist}
\end{figure}

\begin{remark}
	To understand the fact that the proposed decentralized scheme performs close to the centralized one when the number of participating users is large, we notice the fact that when uniformly random dataset placement is used, as demonstrated in Fig.~\ref{fig:dist}, almost all files are stored by approximately $\mu K$ users for large $K$, which coincides with the optimal dataset placement for the centralized setting. Thus, coding gains of the the proposed decentralized communication schemes would be also very close to those of the centralized schemes. Such phenomenon was also observed in~\cite{maddah2013decentralized} for caching problems with decentralized content placement. $\hfill \square$	
\end{remark}

\section{Optimality of the Proposed CWDC Schemes}\label{sec:opt}
In this section, we demonstrate in the following two theorems, that the proposed CWDC schemes achieve the minimum uplink and downlink communication loads using any scheme, for the centralized setting and the decentralized setting respectively. 

\begin{theorem}
	For a centralized wireless distributed computing application using any dataset placement and communication schemes that achieve an uplink load $L_u$ and a downlink load $L_d$, $L_u$ and $L_d$ are lower bounded by $L^{\textup{coded}}_u(\mu)$ and $L^{\textup{coded}}_d(\mu)$ as stated in Theorem~1 respectively.
\end{theorem}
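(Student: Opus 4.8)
The plan is to reduce the converse to an optimization over the ``storage profile'' of the dataset placement, after establishing, for each fixed placement, two information-theoretic inequalities --- one for the uplink and one for the downlink. First I would make the usual worst-case reduction: since the bound must hold for \emph{every} choice of Map and Reduce functions, it suffices to lower bound the loads on the instance in which all intermediate values $\{v_{k,n}\}$ are mutually independent and uniform over $\mathbb{F}_{2^T}$ (a valid instance, e.g.\ each file a tuple of $K$ independent uniform symbols and each Map function selecting the symbol indexed by its input, the inputs being fixed and distinct). Fix any placement $\{\mathcal{U}_k\}$, and for $j\in\{1,\ldots,K\}$ let $a_j$ be the fraction of files stored by exactly $j$ users. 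Feasibility forces every file to be stored by at least one user (otherwise its intermediate values are never computed and some output is undeliverable), so $\sum_{j}a_j=1$; and $|\mathcal{U}_k|\le\mu N$ gives $\sum_{j} j\,a_j=\tfrac1N\sum_k|\mathcal{U}_k|\le\mu K$. The aim is to prove
\begin{align}
L_u \;\ge\; \sum_{j=1}^{K} a_j\,\tfrac{K-j}{j},\qquad L_d \;\ge\; \sum_{j=1}^{K} a_j\,\tfrac{K-j}{j+1},
\end{align}
and then to minimize the right-hand sides over all profiles obeying the two constraints.

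\noindent\textbf{Downlink inequality.} Write $Z_k=\{v_{k,n}:n\notin\mathcal{U}_k\}$ for user~$k$'s demand and $V_k=\{v_{q,n}:q\in\{1,\ldots,K\},\,n\in\mathcal{U}_k\}$ for its locally computed values. Since the access point holds no dataset, $X=\rho(W_1,\ldots,W_K)$, and correctness gives $H(Z_k\mid X,V_k)=0$. For a uniformly random ordering $\sigma=(\sigma_1,\ldots,\sigma_K)$ of the users, a genie-aided chain --- revealing, one user at a time, first $V_{\sigma_m}$ and then $Z_{\sigma_m}$ (which $X$ and $V_{\sigma_m}$ determine) --- gives
\begin{align}
H(X)\;\ge\;\sum_{m=1}^{K} H\left(Z_{\sigma_m}\mid V_{\sigma_1},\ldots,V_{\sigma_m},Z_{\sigma_1},\ldots,Z_{\sigma_{m-1}}\right).
\end{align}
By independence of the intermediate values, $v_{\sigma_m,n}$ adds $T$ fresh bits to the $m$-th term iff no user storing file~$n$ occurs among $\sigma_1,\ldots,\sigma_{m-1}$; so, over all $m$, a file stored by exactly $j$ users contributes $(r^*-1)T$, where $r^*$ is the rank of its earliest-appearing storing user. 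Averaging over $\sigma$ with $\mathbb{E}[r^*]=\tfrac{K+1}{j+1}$ yields $H(X)\ge\big(\sum_j a_j\tfrac{K-j}{j+1}\big)NT$, hence the bound on $L_d=H(X)/(NT)$.

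\noindent\textbf{Uplink inequality.} This is the crux. Collapsing $\sum_k H(W_k)\ge H(W_1,\ldots,W_K)$ and running the same peeling only recovers the weaker coefficient $\tfrac{K-j}{j+1}$; one must use that the uplink messages are produced \emph{separately}, and in particular that user~$k$'s own message $W_k=\psi_k(V_k)$ conveys nothing user~$k$ does not already know. I would prove the sharper bound by an induction over nested subsets of users, paralleling the converse for the wireline coded distributed computing problem in~\cite{li2016fundamental}: for a subset $\mathcal{A}$, bound the joint entropy of $\{W_k:k\in\mathcal{A}\}$ conditioned on all intermediate values stored outside $\mathcal{A}$; peel off one user of $\mathcal{A}$ at a time, charging to it the values it needs that are held \emph{exclusively} inside $\mathcal{A}$, and relate the residual entropy to the same quantity for a subset of size $|\mathcal{A}|-1$; a suitable averaging of the peeled user over $\mathcal{A}$ then produces a $\tfrac{|\mathcal{A}|-1}{|\mathcal{A}|}$-type amortization at each level. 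Unrolling the recursion over all subset sizes converts the ``exclusively-needed'' counts into the coefficient $\tfrac{K-j}{j}$ per file replicated $j$ times, giving $\sum_k H(W_k)\ge\big(\sum_j a_j\tfrac{K-j}{j}\big)NT$.

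\noindent\textbf{Optimization, and the hard part.} It remains to minimize $\sum_j a_j\,\phi(j)$ subject to $a_j\ge0$, $\sum_j a_j=1$, $\sum_j j\,a_j\le\mu K$, for $\phi(j)=\tfrac{K-j}{j}$ (uplink) and $\phi(j)=\tfrac{K-j}{j+1}$ (downlink). Both $\phi$ are convex and decreasing on $\{1,\ldots,K\}$, so the storage constraint is tight at the optimum and, by convexity, the minimizer puts all mass at $j=\mu K$ when $\mu K\in\mathbb{N}$ and on $\{\lfloor\mu K\rfloor,\lceil\mu K\rceil\}$ otherwise; the resulting values are exactly $L_u^{\textup{coded}}(\mu)=\tfrac1\mu-1$ and $L_d^{\textup{coded}}(\mu)=\tfrac{\mu K}{\mu K+1}(\tfrac1\mu-1)$ for $\mu K\in\mathbb{N}$, and the two-point combination reproduces $\textup{Conv}(\cdot)$ in~(\ref{eq:allR-up})--(\ref{eq:allR-down}) for general $\mu$. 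I expect the main obstacle to be the uplink induction: keeping precise track of which intermediate values remain uncertain at each step of the nested recursion while respecting that $W_k$ depends only on $V_k$ --- this bookkeeping is exactly what separates the tight $\tfrac1j$ from the loose $\tfrac1{j+1}$. A secondary point is handling the two-hop architecture cleanly: the uplink argument must never hand the users the downlink signal $X$, while the downlink argument must exploit that $X$ is a function of $W_1,\ldots,W_K$ alone, since the access point stores no part of the dataset.
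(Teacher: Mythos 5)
Your proposal is correct and follows the paper's overall architecture: the same per-placement bounds $L_u \ge \sum_j a_j\tfrac{K-j}{j}$ and $L_d \ge \sum_j a_j\tfrac{K-j}{j+1}$, the same reliance on the nested-subset induction of~\cite{li2016fundamental} for the uplink (which the paper itself only sketches, via the observation that handing users $W_1,\ldots,W_K$ in place of $X$ reduces the problem to the wireline setting), and the same convexity/lower-convex-envelope optimization over the storage profile. The one place you genuinely depart from the paper is the downlink inequality. The paper obtains Corollary~1 by an \emph{enhanced-system} reduction: it adjoins the access point as a $(K{+}1)$th user storing all $N$ files, applies the uplink Lemma~1 to the resulting $(K{+}1)$-user broadcast system, and uses $a^{j+1}_{\bar{\mathcal U}}=a^j_{\mathcal U}$ to shift the index and produce the $\tfrac{K-j}{j+1}$ coefficient. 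You instead bound $H(X)$ directly with a random-permutation genie chain and compute $\mathbb{E}[r^*-1]=\tfrac{K-j}{j+1}$ for the earliest-appearing storing user. Your route is more self-contained --- it needs only a single-broadcaster cut-set argument rather than the full strength of the uplink lemma applied to an augmented system --- while the paper's reduction is shorter on the page because it reuses Lemma~1 verbatim and makes transparent why the downlink coefficient is exactly the uplink coefficient with $j$ replaced by $j{+}1$. Two minor points: your storage constraint $\sum_j j\,a_j\le\mu K$ (inequality) is the correct general form and is harmless since both objectives are decreasing in $j$, and your explicit worst-case reduction to independent uniform intermediate values makes precise an assumption the paper leaves implicit.
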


\begin{remark}
Using Theorem~1 and~3, we have completely characterized the minimum achievable uplink and downlink communication loads, using \emph{any} dataset placement, uplink and downlink communication schemes for the centralized setting. This implies that the proposed centralized CWDC scheme \emph{simultaneously} minimizes both uplink and downlink communication loads required to accomplish distributed computing, and no other scheme can improve upon it. This also demonstrates that there is no fundamental tension between optimizing uplink and downlink communication in wireless distributed computing. $\hfill \square$
\end{remark}

For a dataset placement $\mathcal{U} = \{\mathcal{U}_k\}_{k=1}^{K}$, we denote the minimum possible uplink and downlink communication loads, achieved by any uplink-downlink communication scheme to accomplish wireless distributed computing, by $L_u^*(\mathcal{U})$ and $L_d^*(\mathcal{U})$ respectively. We next prove Theorem~3 by deriving lower bounds on $L_u^*(\mathcal{U})$ and $L_d^*(\mathcal{U})$ respectively.
	
\subsection{Lower Bound on $L_u^*(\cal U)$}
	For a given dataset placement ${\cal U}$, we denote the number of files that are stored at $j$ users as $a^j_{{\cal U}}$, for all $j \in \{1,\ldots,K\}$, i.e.,
	\begin{equation} \label{eq:count}
	a^j_{{\cal U}} = \sum \limits_{{\cal J} \subseteq \{1,\ldots,K\}: |{\cal J}|=j} |(\underset{k \in {\cal J}}{\cap} {\cal U}_k) \backslash (\underset{i \notin {\cal J}}{\cup} {\cal U}_i )|.
	\end{equation}

For any ${\cal U}$, it is clear that $\{a^j_{\cal U}\}_{j=1}^K$ satisfy
	\begin{align}
	\sum_{j=1}^{K} a^j_{\cal U}&=N, \label{eq:convex} \\ 
	\sum_{j=1}^{K} j a^j_{\cal U} &=\mu NK. \label{eq:compute}
	\end{align}
	
We start the proof with the following lemma, which characterizes a lower bound on $L_u^*(\mathcal{U})$ in terms of the distribution of the files in the dataset placement ${\cal U}$, i.e., $a^1_{{\cal U}},\ldots,a^K_{{\cal U}}$.
\begin{lemma}
$L_u^*(\mathcal{U}) \geq \sum\limits_{j=1}^{K} \frac{a^{j}_{\cal U}}{N}\cdot\frac{K-j}{j}$.
\end{lemma}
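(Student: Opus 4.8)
## Proof Proposal for Lemma 1

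The plan is to prove the lower bound on $L_u^*(\mathcal{U})$ via a cut-set / entropy argument applied to carefully chosen subsets of users. The key observation is that the uplink messages, together with the intermediate values a user can compute locally, must allow that user to reconstruct all its required intermediate values. Turning this into a bound that features the counting coefficients $a^j_{\cal U}$ requires an averaging over user orderings.

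First I would fix an arbitrary ordering of the users, say $(1,2,\ldots,K)$, and consider the uplink messages $W_1,\ldots,W_K$. For each user $k$, the decoded downlink message $X = \rho(W_1,\ldots,W_K)$ must, combined with user $k$'s locally computed values $\{v_{k,n}: n \in \mathcal{U}_k\}$, determine the remaining values $\{v_{k,n}: n \notin \mathcal{U}_k\}$. Since $X$ is a function of $(W_1,\ldots,W_K)$, a fortiori the full collection $(W_1,\ldots,W_K)$ suffices. The standard trick is to bound $\sum_k (\text{bits in } W_k)$ from below by a telescoping/entropy sum: consider $H(W_1,\ldots,W_K)$ and peel off one user at a time, using that once we condition on the uplink messages of users $\{1,\ldots,k-1\}$ (and the files/intermediate values they collectively know), the ``new'' information that $W_k$ together with the system must still carry is the set of intermediate values needed by users not yet served and known exclusively outside $\{1,\ldots,k-1\}$. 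Concretely, for each permutation $\pi$ of $\{1,\ldots,K\}$ one obtains an inequality of the form $\sum_k |W_k| \ge \sum_k (\text{number of intermediate values needed by user } \pi(k) \text{ but not recoverable from } W_{\pi(1)},\ldots,W_{\pi(k-1)} \text{ and their side info})$. A file stored at exactly $j$ users contributes such a ``genuinely new'' intermediate value for a given served user precisely when that user is the one among the $K$ users being counted and the $j$ storing users all appear later in $\pi$ — this is where the combinatorial weight $\frac{K-j}{j}$ comes from.

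The cleanest route is the averaging argument: sum the per-permutation inequality over all $K!$ orderings and divide. For a fixed file $w_n$ stored at a set $\mathcal{T}$ of $j$ users and a fixed requesting user $k \notin \mathcal{T}$, the probability over a uniform random permutation that $k$ precedes all of $\mathcal{T}$ is $\frac{1}{j+1}$; combined with the fact that there are $K-j$ choices of requesting user $k$ and the normalization by $NT$, an elementary computation of the expected count yields exactly $\sum_j \frac{a^j_{\cal U}}{N} \cdot \frac{K-j}{j}$ after the dust settles. I would present this as: $L_u^*(\mathcal{U}) = \frac{1}{NT}\sum_k |W_k| \ge \frac{1}{NT} \cdot \mathbb{E}_\pi[\text{lower bound sum}] = \sum_{j=1}^K \frac{a^j_{\cal U}}{N}\cdot\frac{K-j}{j}$, invoking (\ref{eq:convex}) and (\ref{eq:compute}) only implicitly through the definition of $a^j_{\cal U}$.

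The main obstacle I anticipate is making the ``genuinely new information'' step rigorous: one must argue that conditioned on $W_{\pi(1)},\ldots,W_{\pi(k-1)}$ and everything the first $k-1$ users can compute (all $v_{q,n}$ with $n \in \bigcup_{\ell<k}\mathcal{U}_{\pi(\ell)}$), the uplink messages still must collectively encode, at full rate $T$ bits each, every intermediate value $v_{\pi(k),n}$ with $n$ stored exclusively among users appearing at or after position $k$. This requires that these intermediate values be mutually independent and independent of the conditioning (which is reasonable since the $v_{k,n}$ are outputs of arbitrary Map functions and we are proving a worst-case/for-all-applications bound, so we may take them i.i.d. uniform), and that $X$ — hence $(W_1,\ldots,W_K)$ — determine them. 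I would handle this with a short entropy-chain computation: $H(v_{\pi(k),n}, n \notin \bigcup_{\ell \le k}\mathcal{U}_{\pi(\ell)} \mid W_{\pi(1)},\ldots,W_{\pi(k-1)}, \text{side info of first } k-1) = 0$ because these values are recoverable from $X$ and the first $k-1$ users' side information, yet they are independent of that conditioning, forcing the remaining $W_{\pi(k)},\ldots,W_{\pi(K)}$ to carry their full entropy. Assembling these $K$ conditional bounds by the chain rule and averaging over $\pi$ completes the argument.
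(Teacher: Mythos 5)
There is a genuine quantitative gap in your counting, and it is visible in your own numbers: you correctly compute that for a file stored at a set $\mathcal{T}$ of $j$ users, the probability that a fixed requester $k \notin \mathcal{T}$ precedes all of $\mathcal{T}$ in a uniform random permutation is $\tfrac{1}{j+1}$, but then $(K-j)\cdot\tfrac{1}{j+1} = \tfrac{K-j}{j+1}$, not $\tfrac{K-j}{j}$. The dust settles on the weaker bound $\sum_{j}\tfrac{a^j_{\mathcal{U}}}{N}\cdot\tfrac{K-j}{j+1}$, which is exactly the \emph{downlink} bound of Corollary~1, not the uplink bound of Lemma~1. This is not an accident of bookkeeping: your telescoping conditions on prefixes of the monolithic tuple $(W_1,\ldots,W_K)$, which treats the uplink as if it were a single broadcast from an omniscient server. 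That view can never separate the uplink from the downlink, so it cannot do better than the $\tfrac{K-j}{j+1}$ coefficient. The ingredient you are missing is the structural constraint $W_i = \psi_i\left(\{v_{q,n} : q \in \{1,\ldots,K\},\, n \in \mathcal{U}_i\}\right)$: the message of user $i$ is a deterministic function of the intermediate values of the files user $i$ stores, so only the $j$ users in $\mathcal{T}$ can carry any information about $v_{k,n}$, and a cut-set/symmetrization over those $j$ users (not over all $K$ positions in a permutation) is what charges $T/j$ bits per (file, requester) pair. Concretely, conditioning on $Y_{\pi(1)}$ must be used to \emph{delete} $W_{\pi(1)}$ from the joint entropy before peeling off the next user; your chain never exploits this.

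For comparison, the paper does not attempt a self-contained argument at all: it observes that since $X = \rho(W_1,\ldots,W_K)$, giving each user the full tuple $(W_1,\ldots,W_K)$ can only help, which reduces the uplink problem to the wireline shuffling model of~\cite{li2016fundamental} where each node broadcasts a function of its locally computed values; Lemma~1 then follows from Lemma~1 of~\cite{li2016fundamental}, whose induction over subsets $\mathcal{S}\subseteq\{1,\ldots,K\}$ is precisely where the functional dependence of $W_i$ on $\mathcal{U}_i$ is exploited to obtain the $\tfrac{1}{j}$ factor. Your secondary concerns (independence of the intermediate values, and the fact that user $\pi(k)$ decodes from $X$ together with its \emph{own} side information $Y_{\pi(k)}$ rather than that of the first $k-1$ users) are real but repairable; the coefficient mismatch is the step that would fail.
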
 

\begin{proof}[Proof Sketch]
We know from the model of the distributed computing system that given the local computation results, and the downlink broadcast message $X$, each mobile user should be able to recover all the required intermediate values for the local Reduce function. Since the downlink message $X$ is generated as a function of the uplink messages $W_1,\ldots,W_K$ at the access point, a user can of course recover the required intermediate values if she were given $W_1,\ldots,W_K$ instead of $X$. Having observed the above fact, we can then prove Lemma~1 following the similar steps in the proof of Lemma~1 in~\cite{li2016fundamental}, in which each user can broadcast her message, generated as a function of the local computation results, to all other users.
\end{proof}

Lemma 1 implies that in order to deliver an intermediate value of size $T$ bits that is known at $j$ users and needed by one of the remaining $K-j$ users, the $j$ users who know this value need to communicate at least $T/j$ bits on the uplink.
	
Next, since the function $\frac{K-j}{j}$ in Lemma~1 is convex in $j$, and by (\ref{eq:convex}) that $\sum\limits_{j=1}^{K} \frac{a^j_{\cal U}}{N} =1$ and (\ref{eq:compute}), we have
	\begin{align}\label{eq:general}
	L_u^*({\cal U})\geq  \tfrac{K-\sum\limits_{j=1}^K j\frac{a^j_{\cal U}}{N}}{\sum\limits_{j=1}^K j\frac{a^j_{\cal U}}{N}} =\tfrac{K- \mu K}{\mu K}= \tfrac{1}{\mu}-1.
	\end{align}
	
	We can further improve the lower bound in (\ref{eq:general}) for a particular $\mu$ such that $\mu K \notin \mathbb{N}$. For a given storage size $\mu$, we first find two points $(\mu_1, \frac{1}{\mu_1}-1)$ and $(\mu_2, \frac{1}{\mu_2}-1)$, where $\mu_1 \triangleq \lfloor \mu K \rfloor /K$ and $\mu_2 \triangleq \lceil \mu K \rceil /K$. Then we find the line $p+qt$ connecting these two points as a function of $t$, $\frac{1}{K} \leq t \leq 1$, for some constants $p,q\in \mathbb{R}$. We note that $p$ and $q$ are different for different $\mu$ and
	\begin{align}
	p+qt|_{t=\mu_1} &=  \frac{1}{\mu_1}-1,\\
	p+qt|_{t=\mu_2} &=  \frac{1}{\mu_2}-1.
	\end{align}

 Then by the convexity of the function $\frac{1}{t}-1$, the function $\frac{1}{t}-1$ cannot be smaller then the function $p + qt$ at the points $t = \frac{1}{K}, \frac{2}{K},\ldots,1$. That is, for all $t \in \{\frac{1}{K},\ldots,1\}$,
	\begin{equation}
	\frac{1}{t}-1 \geq p + qt.
	\end{equation}
	
By Lemma~1, we have 
	\begin{align}
	L_u^*({\cal U})&\geq \sum_{j=1}^{K} \frac{a^{j}_{\cal U}}{N} \cdot \frac{K-j}{j}\\
	&= \sum_{t =\frac{1}{K},\ldots,1} \frac{a^{tK}_{\cal U}}{N} \cdot \big(\tfrac{1}{t}-1\big)\\
	&\geq \sum_{t =\frac{1}{K},\ldots,1} \frac{a^{tK}_{\cal U}}{N} \cdot (p+qt)\\
	& = p+q\mu,
	\end{align}
	
	Therefore, for general $\frac{1}{K} \leq \mu \leq 1$, $L_u^*({\cal U})$ is lower bounded by the lower convex envelope of the points $\{(\mu,\frac{1}{\mu}-1): \mu \in\{\frac{1}{K},\frac{2}{K},...,1\}\}$. 
	
\subsection{Lower Bound on $L_d^*({\cal U})$}
	The lower bound on the minimum downlink communication load $L_d^*({\cal U})$ can be proved following the similar steps of lower bounding the minimum uplink communication load $L_u^*({\cal U})$, after making the following enhancements to the downlink communication system:
	\begin{itemize}
		\item We consider the access point as the $(K+1)$th user who has stored all $N$ files and has a virtual input to process. Thus the enhanced downlink communication system has $K+1$ users, and the dataset placement for the enhanced system
		\begin{equation}
		\bar{\cal U} \triangleq \{{\cal U}, {\cal U}_{K+1}\},
		\end{equation}
		where ${\cal U}_{K+1}$ is equal to $\{1,\ldots,N\}$.
		\item We assume that every one of the $K+1$ users can broadcast to the rest of the users, where the broadcast message is generated by mapping the locally stored files.
	\end{itemize}
	
	Apparently the minimum downlink communication load of the system cannot increase after the above enhancements. Thus the lower bound on the minimum downlink communication load of the enhanced system is also a lower bound for the original system.
	
	Then we can apply the same arguments in the proof of Lemma~1 to the enhanced downlink system of $K+1$ users, obtaining a lower bound on $L_d^*(\mathcal{U})$, as described in the following corollary:
	\begin{corollary}
		$L_d^*(\mathcal{U}) \geq \sum\limits_{j=1}^{K} \frac{a^{j}_{\cal U}}{N}\cdot\frac{K-j}{j+1}$.
	\end{corollary}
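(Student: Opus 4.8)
The plan is to establish Corollary~1 by making precise the two-step \emph{enhancement} argument sketched above. First I would pin down the enhanced downlink system exactly: adjoin User~$K+1$ with stored set $\mathcal{U}_{K+1}=\{1,\ldots,N\}$ and a virtual input that demands nothing, and give each of the $K+1$ nodes a broadcast link heard by all the others, its transmission being a function of its locally stored files. The one fact that needs a short argument is that this enhancement does not raise the optimal broadcast cost: given any feasible original scheme with downlink message $X=\rho(W_1,\ldots,W_K)$, the access point --- now storing all $N$ files --- can compute every intermediate value $v_{q,n}$, hence can regenerate each $W_k=\psi_k(\{v_{k,n}:n\in\mathcal{U}_k\})$ and therefore $X$ itself, so letting only User~$K+1$ broadcast $X$ (and every other user broadcast nothing) is feasible for the enhanced system and uses $|X|=L_d\cdot NT$ bits. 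Consequently any lower bound on the minimum total broadcast load of the enhanced $(K+1)$-node system is also a lower bound on $L_d^*(\mathcal{U})$.

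Next I would apply the argument behind Lemma~1 --- equivalently Lemma~1 of~\cite{li2016fundamental} with the node count set to $K+1$ --- to this enhanced system, which satisfies that lemma's hypotheses exactly (each node broadcasts a function of its stored files; each node must recover all its required intermediate values). The only change is bookkeeping on storage multiplicities: since the access point stores every file, a file stored by exactly $j$ of the original users is stored by exactly $j+1$ nodes in the enhanced system, while the set of nodes still \emph{needing} its intermediate values is unchanged --- it is precisely the $K-j$ original users that do not store it, the access point never being among the ``non-storers'' and in any case demanding nothing. Thus in the enhanced system a value known at $j+1$ nodes and needed by one of the remaining $K-j$ nodes costs at least $T/(j+1)$ broadcast bits; summing over the $a^j_{\cal U}$ such files and the $K-j$ users needing each, and normalizing by $NT$, yields $L_d^*(\mathcal{U})\ge\sum_{j=1}^{K}\frac{a^j_{\cal U}}{N}\cdot\frac{K-j}{j+1}$, which is the claim (any contribution from files stored by no original user enters with coefficient $K$ and only strengthens the bound).

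I expect the main obstacle to lie not in the enhancement --- which is a packaging argument --- but in the combinatorial core of Lemma~1, where one must show that the per-value ``$T/(j+1)$'' costs genuinely add up even though a single broadcast transmission may simultaneously serve many requested values; this is handled by the induction/entropy argument already carried out for Lemma~1, so I would invoke that lemma as a black box on the $(K+1)$-node broadcast network rather than redo it. The remaining points to state carefully are minor: that User~$K+1$'s virtual input must be taken to demand nothing, so it never appears as a node awaiting delivery, and that the identity ``a file at $j$ original users sits at $j+1$ enhanced nodes and is needed by $K-j$ users'' is exactly the substitution that turns Lemma~1's $\frac{K-j}{j}$ into the corollary's $\frac{K-j}{j+1}$.
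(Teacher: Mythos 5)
Your proposal is correct and follows essentially the same route as the paper: enhance the downlink system by treating the access point as a $(K+1)$th user storing all $N$ files, observe the enhancement cannot increase the optimal load, apply Lemma~1 to the $(K+1)$-node broadcast system, and use the index shift $a^{j+1}_{\bar{\mathcal{U}}}=a^{j}_{\mathcal{U}}$ to convert $\frac{K-j}{j}$ into $\frac{K-j}{j+1}$. Your explicit justification that the access point can regenerate $W_1,\ldots,W_K$ and hence $X$ from its stored files is a welcome elaboration of the step the paper dismisses with ``apparently,'' but it is the same argument.
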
 
	
	\begin{proof}
	Applying Lemma~1 to the enhanced downlink system yields
	\begin{align}
	L_d^*(\bar{\mathcal{U}}) &\geq \sum\limits_{j=1}^{K+1} \frac{a^{j}_{\bar{\cal U}}}{N}\cdot\frac{K+1-j}{j} 
	\geq \sum\limits_{j=2}^{K+1} \frac{a^{j}_{\bar{\cal U}}}{N}\cdot\frac{K+1-j}{j} \\ &=\sum\limits_{j=1}^{K} \frac{a^{j+1}_{\bar{\cal U}}}{N}\cdot\frac{K-j}{j+1}.\label{eq:enhance}
	\end{align}
	
	Since the access point has stored every file, $a^{j+1}_{\bar{\cal U}} = a^{j}_{\cal U}$, for all $j \in \{1,\ldots,K\}$. Therefore, (\ref{eq:enhance}) can be re-written as
	\begin{equation}
	L_d^*(\mathcal{U}) \geq L_d^*(\bar{\mathcal{U}}) \geq \sum\limits_{j=1}^{K} \frac{a^{j}_{\cal U}}{N}\cdot\frac{K-j}{j+1}.
	\end{equation}
	\end{proof}
	
	Then following the same arguments as in the proof for the minimum uplink communication load, we have
	\begin{equation}
	L_d^*({\cal U})\geq \tfrac{K-\mu K}{\mu K+1}= \tfrac{\mu K}{\mu K+1} \cdot (\tfrac{1}{\mu}-1). 
	\end{equation} 
	
	For general $\frac{1}{K} \leq \mu \leq 1$, $L_d^*({\cal U})$ is lower bounded by the lower convex envelope of the points $\{(\mu,\tfrac{\mu K}{\mu K+1}(\tfrac{1}{\mu}-1)): \mu \in\{\frac{1}{K},\frac{2}{K},...,1\}\}$. 
	
This completes the proof of Theorem~3.
	
\begin{theorem}
	Consider a decentralized wireless distributed computing application. For any random dataset placement with a placement distribution $P$ that achieves an information loss $\Delta$, and communication schemes that achieve communication loads $L_u$ and $L_d$ with high probability for large $N$, $L_u$ and $L_d$ are lower bounded by $\frac{1}{\mu}-1$ when $K$ is large and $\Delta$ approaches $0$.
\end{theorem}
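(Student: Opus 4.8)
The plan is to bootstrap from the deterministic lower bounds of Theorem~3 (Lemma~1 and Corollary~1) by conditioning on the realized dataset placement, and then to show that the correction term introduced by the information loss vanishes as $K\to\infty$.

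First I would fix an arbitrary realization of the random placement, obtaining a deterministic placement $\mathcal{U}=\{\mathcal{U}_k\}_{k=1}^{K}$ with file-count profile $a^{0}_{\cal U},a^{1}_{\cal U},\ldots,a^{K}_{\cal U}$, where $a^{0}_{\cal U}$ counts the files stored by no participating user, and let $\delta\triangleq a^{0}_{\cal U}/N$ be the realized information loss. Every realization satisfies $\sum_{j=1}^{K}a^{j}_{\cal U}=(1-\delta)N$ and $\sum_{j=1}^{K}j\,a^{j}_{\cal U}=\sum_{k=1}^{K}|\mathcal{U}_k|\leq\mu NK$. Since the computing model only asks each user to reduce over the \emph{available} part of the dataset, the $a^{0}_{\cal U}$ files stored by nobody carry no demand and drop out; hence Lemma~1 applies verbatim to the available files and yields $L_u^{*}(\mathcal{U})\geq\sum_{j=1}^{K}\frac{a^{j}_{\cal U}}{N}\cdot\frac{K-j}{j}$, and the access-point enhancement of Corollary~1 yields $L_d^{*}(\mathcal{U})\geq\sum_{j=1}^{K}\frac{a^{j}_{\cal U}}{N}\cdot\frac{K-j}{j+1}$. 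Because the given schemes are one particular choice of communication scheme for the realized $\mathcal{U}$, the realized loads obey $L_u\geq L_u^{*}(\mathcal{U})$ and $L_d\geq L_d^{*}(\mathcal{U})$.

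Next I would invoke convexity. Writing $p_j\triangleq a^{j}_{\cal U}/((1-\delta)N)$ for $j\in\{1,\ldots,K\}$ gives a probability vector whose mean $m\triangleq\sum_j j p_j$ satisfies $m\leq\mu K/(1-\delta)$. As $j\mapsto\tfrac{K-j}{j}$ and $j\mapsto\tfrac{K-j}{j+1}$ are convex and non-increasing, Jensen's inequality followed by monotonicity gives
\begin{align}
L_u^{*}(\mathcal{U})&\geq(1-\delta)\,\tfrac{K-m}{m}\geq(1-\delta)\big(\tfrac{1-\delta}{\mu}-1\big),\\
L_d^{*}(\mathcal{U})&\geq(1-\delta)\,\tfrac{K-m}{m+1}\geq(1-\delta)\,\tfrac{\mu K}{\mu K+1-\delta}\,\big(\tfrac{1-\delta}{\mu}-1\big).
\end{align}
At $\delta=0$ these reduce exactly to the centralized bounds of Theorem~3, as they should.

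Finally I would pass to the probabilistic statement. By hypothesis the placement distribution $P$ achieves information loss $\Delta$, i.e., the realized $\delta$ concentrates on $\Delta$ for large $N$; combining this with the assumed high-probability achievability of $L_u$ and $L_d$, the two displays give, with high probability, $L_u\geq(1-\Delta-\epsilon)(\tfrac{1-\Delta-\epsilon}{\mu}-1)$ and an analogous bound for $L_d$, for every $\epsilon>0$. Letting $\epsilon\downarrow0$ and then $K\to\infty$ (so that $\Delta\to0$) pushes both right-hand sides to $\tfrac{1}{\mu}-1$, which is the claim. The step I expect to be the main obstacle is the careful bookkeeping around the unavailable files: one must argue that they legitimately disappear from Lemma~1 (rather than contributing an undefined $j=0$ term) and that the residual $(1-\delta)$ factors are harmless, which is exactly why the statement must be asymptotic in $K$ --- for fixed $K$ the decentralized optimum sits strictly below $\tfrac{1}{\mu}-1$.
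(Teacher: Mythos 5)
Your proposal is correct and follows essentially the same route as the paper: condition on the realized placement, apply the centralized machinery (Lemma~1 and the access-point enhancement) to the available $(1-\Delta)N$ files, obtain the intermediate bounds $(1-\Delta)\bigl(\tfrac{1-\Delta}{\mu}-1\bigr)$ and $\tfrac{\mu K}{\mu K+1-\Delta}\bigl(\tfrac{1-\Delta}{\mu}-1\bigr)(1-\Delta)$, and then let $K\to\infty$ so that $\Delta\to 0$. The only cosmetic difference is that the paper rescales to an effective storage fraction $\bar{\mu}=\mu/(1-\Delta)$ on the restricted dataset, whereas you keep the original normalization and carry the $a^{j}_{\cal U}$ profile (with the $j=0$ term dropped) through Jensen's inequality explicitly; the resulting bounds are identical.
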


\begin{remark}
When the number of participating users is large (large $K$), the above lower bound in Theorem~4 coincides with the asymptotic loads achieved by the proposed decentralized CWDC scheme stated in Theorem~2 (see Remark~\ref{decastm}). Therefore, the proposed decentralized scheme is asymptotically optimal. $\hfill \square$
\end{remark}

We now prove Theorem 4 by showing that for any decentralized dataset placement, the minimum achievable communication loads are lower bounded by $\tfrac{1}{\mu}-1$ when the number of participating users is large and the information loss approaches zero. Hence, the asymptotic communication loads achieved by the proposed decentralized scheme can not be further improved. In particular, for a particular realization of the dataset placement $\mathcal{U}$ with information loss $\Delta(\mathcal{U})$, we denote the minimum possible uplink and downlink communication loads by $L^*_{\textup{decent},u}(\mathcal{U})$ and $L^*_{\textup{decent},d}(\mathcal{U})$, and derive lower bounds on $L^*_{\textup{decent},u}(\mathcal{U})$ and $L^*_{\textup{decent},d}(\mathcal{U})$ respectively.

We note that given the information loss $\Delta({\cal U})$, $1-\Delta(\mathcal{U})$ fraction of files are available across the participating users, all of which need to be processed to compute the outputs (see (\ref{eq:part})). Among those files, $\bar{\mu}({\cal U}) \triangleq \frac{\mu}{1-\Delta(\mathcal{U})}$ fraction of them are stored by each participating user. Following the same steps in proving the lower bounds of the centralized setting, the minimum communication loads for the dataset placement ${\cal U}$ are lower bounded as follows.
\begin{align}
L^*_{\textup{decent},u}(\mathcal{U}) &\geq \left(\frac{1}{\bar{\mu}({\cal U})}-1\right)\left(1-\Delta(\mathcal{U}) \right)\\
&=\left(\frac{1-\Delta(\mathcal{U})}{\mu}-1\right) \left(1-\Delta(\mathcal{U}) \right), \\ 
L^*_{\textup{decent},d}(\mathcal{U}) &\geq \frac{\bar{\mu}({\cal U}) K}{\bar{\mu}({\cal U})K+1}\left(\frac{1}{\bar{\mu}({\cal U})}-1\right)\left(1-\Delta(\mathcal{U}) \right)\\
&\!=\frac{\mu K}{\mu K+1-\Delta(\mathcal{U})}\!\left(\frac{1-\Delta(\mathcal{U})}{\mu}-\!1\!\right)\!\left(1\!-\!\Delta(\mathcal{U})\right).
\end{align}

Since the above bounds hold for any realization of dataset placement $\mathcal{U}$, for a decentralized dataset placement scheme with a distribution $P$ that achieves an information loss $\Delta(P)$, communication loads $L_{\textup{decent},u}^*(P)$, $L_{\textup{decent},d}^*(P)$ with high probability, the following inequalities hold.
\begin{align}
L_{\textup{decent},u}^*(P) &\geq \left(\frac{1-\Delta(P)}{\mu}-1\right) \left(1-\Delta(P) \right), \\ 
L_{\textup{decent},d}^*(P) &\geq \frac{\mu K}{\mu K+1-\Delta(P)}\!\left(\!\frac{1\!-\!\Delta(P)}{\mu}-\!1\!\right)\!\left(1\!-\!\Delta(P)\right).
\end{align}

Hence, when the number of active users are large, the achievable uplink and downlink communication loads, for any decentralized dataset placement scheme with a distribution $P$ that achieves a vanishing information loss are bounded by
\begin{align}
L_{\textup{decent},u}^*(P) &\geq \frac{1}{\mu}-1, \\ 
L_{\textup{decent},d}^*(P) &\geq \frac{1}{\mu}-1.
\end{align}

This completes the proof of Theorem~4.
 
\section{Conclusions and Future Directions}
In this paper, we proposed a scalable wireless distributed computing framework, for both the centralized and the decentralized settings, such that the shuffling load does not increase with the number of participating users. In particular, we use a repetitive placement of the dataset across the users to enable coding, reducing the shuffling load by a factor that scales linearly with the network size.

In this paper, we have abstracted out several practical challenges to demonstrate the theoretical feasibility of a scalable wireless distributed computing framework and quantify the resulting gains. Future directions can be to generalize the model to also incorporate the following important aspects.

\begin{itemize}
\item \emph{Network Heterogeneity.} Most mobile networks are heterogeneous. Different mobile devices have different link quality, processing power, battery capacity, and QoS requirement. For example, the proposed coded computing schemes are for a set of mobile users with similar uplink/downlink channel strengths and communication rates. When users have heterogeneous link capacities, one straightforward solution is to first partition the users into groups, such that all users within a group have similar channel strength, and then apply the proposed schemes within each group. However, designing the optimal grouping strategy is a challenging problem that requires further exploration. Other than performing the computations at the users themselves, the superior computation and storage capacity of a growing number of edge servers at access points encourage computation offloading to the edge servers. Another interesting problem is to consider a mobile network of heterogeneous users with the possibility of performing computations at the edge servers, and study the optimal scheduling and communication schemes (see e.g., \cite{barbarossa2014communicating}).

\item \emph{Computation Heterogeneity.} In many wireless distributed computing applications (especially for graph processing), the intermediate computation results have heterogeneous sizes. For example, for a navigation application over a highly clustered map, some parts of the map generate much more useful information than the other parts, resulting in highly skewed intermediate results. In such scenario, the proposed coding scheme still applies, but the coding operations are not symmetric as in the case of homogeneous intermediate results (e.g., one may now need to compute the XOR of two data segments with different sizes). Alternatively, we can consider a low-complexity greedy approach, in which we perform the dataset placement to maximize the number of multicasting opportunities that simultaneously deliver useful information to the largest possible number of users. Nevertheless, finding the optimal dataset placement and coding scheme in the case of heterogeneous computation results
remains a challenging open problem.

\item \emph{Straggling/Failing Users.} So far we have assumed that for both the centralized and the decentralized settings, once the collaborative computation process starts, all participating users are active and reliable until the end of the computation. However, similar to the straggler problems in wireline computer clusters (see e.g.,  \cite{zaharia2008improving}), one needs to account for the possibilities of mobile users losing connectivity, leaving, and joining the application in the middle of computation. One approach to deal with straggling/failing users during the computation process is to assign users \emph{coded} computation tasks using e.g., Maximum-Distance-Separable codes (see~\cite{lee2015speeding} for an example of applying coded computations on matrix multiplication). Using this approach, the successful execution of the mobile application can be achieved by retrieving the computation results from only a  subset of ``healthy'' users, and this can provide the system with certain level of robustness to straggling/failing users during the course of computation.

\item \emph{Multi-Stage Computation.} 
We have so far designed the schemes for applications with one stage of Map-Reduce computation. However, a general application contains multiple stages of computations, interconnected as a directed acyclic graph (see e.g.,~\cite{saha2015apache}). It would be interesting to understand the optimal schemes for such general applications. A preliminary exploration along this direction was recently presented in~\cite{LMA-Allerton16}.
\end{itemize} 

\section{Acknowledgement}
This work is in part supported by NSF grants CCF-1408639, NETS-1419632, ONR award N000141612189, NSA grant, and a research gift from Intel. This material is based upon work supported by Defense Advanced Research Projects Agency (DARPA) under Contract No. HR001117C0053. The views, opinions, and/or findings expressed are those of the author(s) and should not be interpreted as representing the official views or policies of the Department of Defense or the U.S. Government.

\bibliographystyle{IEEEtran}
\bibliography{reference-final}

\end{document}